\documentclass[12pt]{article}
\usepackage[utf8]{inputenc}
\usepackage{lmodern}
\usepackage[T1]{fontenc}

\usepackage{graphicx}
\usepackage{fullpage}
\usepackage{amsmath,amsthm,amssymb,mathtools}

\usepackage{float}
\usepackage{bbm}
\usepackage{enumitem}

\usepackage{hyperref}
\hypersetup{
colorlinks=true,
linkcolor=blue,
filecolor=blue,
urlcolor=blue,
citecolor=blue,
}

\usepackage{tikz}
\usepackage{pgfplots}
\usetikzlibrary{arrows.meta}
\usetikzlibrary{arrows}
\usetikzlibrary{decorations.pathreplacing}

\pgfplotsset{compat=1.18} 

\usepackage{natbib}
\setcitestyle{authoryear,round} 

\usepackage{caption}

\usepackage{setspace}

\usepackage[authormarkup=none]{changes}
\definechangesauthor[name=Alp, color=red]{alp}

\usepackage[margin=3.2cm]{geometry}
\usepackage{titlesec}
\titleformat*{\section}{\large\bfseries}
\titleformat*{\subsection}{\bfseries}

\def\RR{\mathbb{R}}	

\newtheorem{thm}{Theorem}
\newtheorem{lemma}{Lemma}

\newtheorem{example}{Example}

\newtheorem{prop}[thm]{Proposition}

\newtheorem{innercustomthm}{Theorem}
\newenvironment{customthm}[1]
  {\renewcommand\theinnercustomthm{#1}\innercustomthm}
  {\endinnercustomthm}

\newcommand{\overbar}[1]{\mkern 1.5mu\overline{\mkern-1.5mu#1\mkern-1.5mu}\mkern 1.5mu}

\begin{document}
\onehalfspacing 

\thispagestyle{empty}
\def\thefootnote{\fnsymbol{footnote}}

\begin{center}
{\Large\bf Vocabulary aggregation\footnote{We thank Ehsan Ali, Maria Rifqi, Marco Tolotti and audiences in Catania, Paris, Turin and Venice for their comments. An inspiring talk delivered by Alan Miller at the University of Florence in September 2023 sparked our curiosity. The first author received funding from the European Union Next-Generation EU National Recovery and Resilience Plan (NRRP) under the PRIN project 20222Z3CR7, ``Nudging under Limited Attention''. The second author received funding from the European Union's Horizon 2020 research and innovation programme under the Marie Skłodowska-Curie grant agreement no.\ 956107, ``Economic Policy in Complex Environments''.}}
\end{center}

\vspace*{0.5cm}
\begin{center}
\begin{tabular}{ccc}
{\sc Marco LiCalzi} & $\quad$ & {\sc M.\ Alperen Yasar}\\
{\small Universit\`{a} Ca' Foscari Venezia} & &  {\small Institut des Sciences Cognitives Marc Jeannerod, CNRS} \\
{\small\tt licalzi@unive.it} & & {\small\tt alperen.yasar@isc.cnrs.fr}\\
\end{tabular}
\end{center}
\vspace*{0.5cm}
\begin{center}
February 2026
\end{center}
\vspace*{0.4cm}

{\small
\noindent {\bf Abstract.} A vocabulary is a list of words designating subsets from a grand set $X$. We model a vocabulary as a partition of $X$ and study the aggregation of individual vocabularies into a collective one. We characterize aggregation rules when $X$ is linearly ordered and each word of the vocabulary spans an order interval. We allow for individual vocabularies to differ both in the number and in the span of their words. Under a suitable restriction on agents' preferences, we show that our aggregation rules are strategy-proof.
}

\medskip\par\noindent
{\bf Keywords:\/} collective judgement, ordinal scale of measurement, extended medians.

\medskip\par\noindent
{\bf JEL Classification Numbers:\/} D70, D83, Z13

\setcounter{footnote}{0}
\def\thefootnote{\arabic{footnote}}

\newpage

\section{Introduction}

An ordinal structure is often communicated using a coarse \emph{grading scale}. Restaurants, movies, and many other goods are commonly rated. Students' (and teachers') performance is given letter or numerical grades. \citet[Chapter~7]{bal10} collect several examples and historical tidbits; for instance, the mathematician Monge used grades from $a$ to $g$ during his tenure (1783--1790) as examiner for the French navy, and Harvard introduced the five-letter grading scale in 1883. 

Legal codes have statutes rating crimes by their severity. Section~2501 of the 2023 Pennsylvania Statute codifies five different categories of homicide; see \cite{kee49} on the motivation for this refinement. Similarly, Section~775.081 of the 2023 Florida Statute  classifies felonies into five categories and misdemeanors into two categories.

Ordinal classifications abound also in biology and medicine. For instance, the Apgar score summarizes the health of newborn infants using 11 ratings; see \cite{mis15} for more examples. Psychology, especially psychometrics, is concerned with the measurement of constructs often defined on nominal or ordinal scales \citep{ste46}. This paper focuses on ordinal scales.

An important distinction is whether the grading scale is the coarsening of a  (perhaps, latent) linear order that graders' elicitations adhere to. In many applications, there is no such intrinsic order: different agents may come up with ordinal evaluations based on different premises and rank objects inconsistently --- that is, one puts $a$ above $b$ while another ranks $b$ over $a$. A lack of consensus on the underlying linear order opens up significant hurdles when grading is carried out by groups or committees \citep{mor16}.

Other applications, however, are based on a (reasonably) objective ordering. The 1-minute Apgar score assesses how well a baby has tolerated the birthing process. Two doctors may assign babies with different Apgar scores, but their relative grades are expected to be consistent. The statutes strongly suggest judges or juries rank crimes only by their severity, to match them with proportionate punishments. When a latent order exists, it may be learned from preference judgements \citep{coe97}. 

We focus on the problem of aggregating individual classifications based on a common linear order into a consistent collective classification. See \cite{fal20} for a related approach to collective evaluations. Such aggregation rules are useful in statistics and machine learning, where ensemble methods \citep{zho12} combine multiple learning algorithms to obtain superior performance in classification or prediction tasks involving ordinal structures \citep{bel20}. 

Imagine an agent tasked with partitioning an interval of the real line into subintervals. Each subinterval is labeled and interpreted as a \emph{word}. Because subintervals inherit the intrinsic order of the real line, words are naturally ordered from smaller to higher ranks. Then the individual partition defines a grading scale. For instance, an agent may rate `small' any value below $t_1$ and `intermediate' any value between $t_1$ and $t_2$. Another agent may rate `small' any value below $t^\prime$, with $t_1 < t^\prime < t_2$. The two agents use the word 'small' differently, but they agree that $t_1 < t^\prime < t_2$.

An individual grading scale is called a \emph{vocabulary}. Thus, we are interested in the aggregation of individual vocabularies into a collective vocabulary.\footnote{\cite{loe12} offer a broader perspective where common vocabularies are instrumental in the social construction of meaning.} Following the abstract approach of \cite{wil75} and \cite{rub86}, we study aggregation rules that map a profile of individual vocabularies into a collective vocabulary, using the assumption of a common linear order to leverage results from the theory of aggregation on ordinal scales \citep[Chapter~8]{gra09}. 

Our construction of the collective vocabulary is a special case of the general problem of choosing a common partition given a profile of individual partitions \citep{bar81} or equivalence relations \citep{fis86}. This issue has been recently revisited under the assumption that agents have preferences over the common partition, studying its strategy-proofness in general \citep{mis12} and in the context of budget proposals \citep{fre21}. Finally, \citet{kla20} and \citet{bha24} consider agents with single-peaked preferences who engage in social choice about the aggregation of contiguous objects into intervals. 

Our work is inspired by \cite{mil25}, who studies the aggregation of intervals (equivalent to minimal vocabularies) in reference to the construction of community standards, where the goal is to ascertain the bounds within which an action is deemed reasonable, or a statement is not defamatory. 

Section~\ref{sec2} sets up the model. Section~\ref{sec3} starts with the simple case of \emph{homogeneous} vocabularies where all the agents use the same set of words, but may attach them to different subintervals. We generalize the median rule used by \cite{blo10} and \citet{far11} and the endpoint rules proposed in \cite{mil25} from intervals to vocabularies. Section~\ref{sec4} scales up to the case of \emph{heterogeneous} vocabularies where agents may neglect some words; for example, an agent uses only the opposition `white-black' while another has a nuanced vocabulary with a few intermediate shades of grey.

Section~\ref{sec5} deals with the case of \emph{incomplete} homogeneous vocabularies, when agents do not submit their classifications but merely rate a few exemplars in accordance with their individual vocabularies. We show how to derive a collective vocabulary that is usually incomplete but incorporates all the information provided by agents' classified exemplars. Section~\ref{sec6} introduces the assumption that agents have preferences over the collective vocabulary, and may try to manipulate it by misreporting their individual vocabularies.  We consider a class of individual preferences under which strategy-proofness is a key property that characterizes the same aggregation rules derived in the earlier sections.

\section{Setup}\label{sec2}

\subsection{An example}\label{sec21}

We introduce the model with an illustrative example, which we later return to.

\begin{example}\label{ex1}
There is a committee of three faculty members who must agree on a grading scheme for students. Students' exams are scored on a scale ranging from $0$ to $100$, but each student is to receive one of the five grades $F|D|C|B|A$. Each professor issues an individual recommendation for mapping scores to grades. 

Primus (Agent~1) recommends equally spaced intervals, which we represent in a visual diagram:
\begin{figure}[H]\centering
	\begin{tikzpicture}[scale=0.1]
		\node[text width=2cm] at (-10,30) {Agent $1$};
		\draw[decorate, decoration={brace}, yshift=10ex]  (-0.5,30) -- node[above=0.4ex] {$F$}  (19,30);
		\draw[decorate, decoration={brace}, yshift=10ex]  (21,30) -- node[above=0.4ex] {$D$}  (39,30);
		\draw[decorate, decoration={brace}, yshift=10ex]  (41,30) -- node[above=0.4ex] {$C$}  (59,30);
		\draw[decorate, decoration={brace}, yshift=10ex]  (61,30) -- node[above=0.4ex] {$B$}  (79,30);
		\draw[decorate, decoration={brace}, yshift=10ex]  (81,30) -- node[above=0.4ex] {$A$}  (100.5,30);
		\draw[-, thick] (0.5,30) -- (20,30);
		\draw[fill=white] (0,30) circle (5mm) node[below=1mm] {$0$};
		\draw[fill=black] (20,30) circle (5mm) node[below=1mm] {$20$};
		\draw[fill=black] (40,30) circle (5mm) node[below=1mm] {$40$};
		\draw[fill=black] (60,30) circle (5mm) node[below=1mm] {$60$};
		\draw[fill=black] (80,30) circle (5mm) node[below=1mm] {$80$};
		\draw[-, thick] (20,30) -- (40,30);
		\fill (20,30) circle (5mm);
		\draw[-, thick] (40,30) -- (60,30);
		\fill (40,30) circle (5mm);
		\draw[-, thick] (60,30) -- (80,30);
		\fill (80,30) circle (0.25);
		\draw[-, thick] (80,30) -- (100.5,30);
		\draw[fill=white] (100,30) circle (5mm) node[below=1mm] {$100$};
	\end{tikzpicture}
\end{figure}

\noindent
The diagram is unclear about which grade an interior (black) endpoint maps to. However, by convention, ambiguities are solved in favor of students: each black endpoint is attributed to the grade on its right. Hence, Primus' recommendation maps $[0,20)$ to $F$, $[20,40)$ to $D$, and so on.

Secunda (Agent~2) is more lenient than Primus, and recommends a generous scale:
\begin{figure}[H]\centering
	\begin{tikzpicture}[scale=0.1]
		\node[text width=2cm] at (-10,15) {Agent $2$};
		\draw[decorate, decoration={brace}, yshift=10ex]  (-0.5,15) -- node[above=0.4ex] {$F$}  (9,15);
		\draw[decorate, decoration={brace}, yshift=10ex]  (11,15) -- node[above=0.4ex] {$D$}  (19,15);
		\draw[decorate, decoration={brace}, yshift=10ex]  (21,15) -- node[above=0.4ex] {$C$}  (29,15);
		\draw[decorate, decoration={brace}, yshift=10ex]  (31,15) -- node[above=0.4ex] {$B$}  (49,15);
		\draw[decorate, decoration={brace}, yshift=10ex]  (51,15) -- node[above=0.4ex] {$A$}  (100.5,15);
		\draw[-, thick] (0.5,15) -- (10,15);
		\draw[fill=white] (0,15) circle (5mm) node[below=1mm] {$0$};
		\draw[fill=black] (10,15) circle (5mm) node[below=1mm] {$10$};
		\draw[-, thick] (10,15) -- (20,15);
		\draw[fill=black] (20,15) circle (5mm) node[below=1mm] {$20$};
		\draw[-, thick] (20,15) -- (30,15);
		\draw[fill=black] (30,15) circle (5mm) node[below=1mm] {$30$};
		\draw[-, thick] (30,15) -- (50,15);
		\draw[fill=black] (50,15) circle (5mm) node[below=1mm] {$50$};
		\draw[-, thick] (50,15) -- (100.5,15);
		\draw[fill=white] (100,15) circle (5mm) node[below=1mm] {$100$};
	\end{tikzpicture}
\end{figure}

Tertium (Agent~3) prefers to cluster the extremes and recommends:
\begin{figure}[H]\centering
	\begin{tikzpicture}[scale=0.1]
		\node[text width=2cm] at (-10,00) {Agent $3$};
		\draw[decorate, decoration={brace}, yshift=10ex]  (-0.5,0) -- node[above=0.4ex] {$F$}  (29,0);
		\draw[decorate, decoration={brace}, yshift=10ex]  (31,0) -- node[above=0.4ex] {$D$}  (44,0);
		\draw[decorate, decoration={brace}, yshift=10ex]  (46,0) -- node[above=0.4ex] {$C$}  (54,0);
		\draw[decorate, decoration={brace}, yshift=10ex]  (56,0) -- node[above=0.4ex] {$B$}  (69,0);
		\draw[decorate, decoration={brace}, yshift=10ex]  (71,0) -- node[above=0.4ex] {$A$}  (100.5,0);
		\draw[-, thick] (-0.5,0) -- (30,0);
		\draw[fill=white] (0,0) circle (5mm) node[below=1mm] {$0$};
		\draw[-, thick] (30,0) -- (45,0);
		\fill[black] (30,0) circle (5mm) node[below=1mm] {$30$};
		\fill[black] (45,0) circle (5mm) node[below=1mm] {$45$};
		\draw[-, thick] (45,0) -- (55,0);
		\fill[black] (55,0) circle (5mm) node[below=1mm] {$55$};
		\draw[-, thick] (55,0) -- (70,0);
		\fill[black] (70,0) circle (5mm) node[below=1mm] {$70$};
		\draw[-, thick] (70,0) -- (100.5,0);
		\draw[fill=white] (100,0) circle (5mm) node[below=1mm] {$100$};
	\end{tikzpicture}
\end{figure}
\end{example}

The question discussed in this paper is the aggregation of individual recommendations into a common grading scheme.

\subsection{The model}\label{sec22}

We consider an open interval $X$ from the linearly ordered set $(\RR,<)$. Assuming that $X$ is open simplifies some technicalities, but our results extend to any nonempty interval. 

An \emph{interval partition} $\Pi$ for $X$ is formed by a finite number of intervals delimited by endpoints. Intuitively, the intervals enact the partition but their endpoints are an effective way to identify it. For instance, Primus' grading scheme in Example~\ref{ex1} is the interval partition for $X=[0,100]$ associated with the endpoints $20, 40, 60, 80$. Conveniently, we identify an interval partition with the list of all its (internal) endpoints. 

An interval partition $\Pi$ categorizes $X$ into distinct (but yet unnamed) classes. If we add a distinct label for each interval, we have a \emph{labeled partition}; e.g., Primus' grading scheme attaches the label $F$ to the interval $[0,20)$, the label $D$ to the interval $[20, 40)$, and so on.

A labeled partition is a \emph{vocabulary}, where each item is a label (called \emph{word}) supported by an interval (called \emph{extent}\footnote{We follow the convention that the extent supporting a word includes the left endpoint, but other approaches are possible.}). Figure~\ref{fig-ex2} depicts a 3-item vocabulary for the interval $(0,1)$, where the three words $w_1, w_2, w_3$ have extents $(0,a), [a,b), [b,1)$, respectively.
\begin{figure}[!ht]
\vspace{2mm}
\centering
\begin{tikzpicture}
	\draw[>={Latex[width=4mm]}, thick] (0, 0) -- (9, 0);
	\draw[fill=white] (0,0) circle (0.75mm) node[below=2mm] {$0$};
	\draw[decorate, decoration={brace}, yshift=1ex]  (0.1,0) -- node[above=0.4ex] {$w_1$}  (2.9,0);
	\fill[black] (3,0) circle (0.75mm) node[below=2mm] {$a$};
	\draw[decorate, decoration={brace}, yshift=1ex]  (3.1,0) -- node[above=0.4ex] {$w_2$}  (6.9,0);
	\fill[black] (7,0) circle (0.75mm) node[below=2mm] {$b$};
	\draw[decorate, decoration={brace}, yshift=1ex]  (7.1,0) -- node[above=0.4ex] {$w_3$}  (8.9,0);
	\draw[fill=white] (9,0) circle (0.75mm) node[below=2mm] {$1$};
\end{tikzpicture}
\caption{A vocabulary}\label{fig-ex2}
\end{figure}
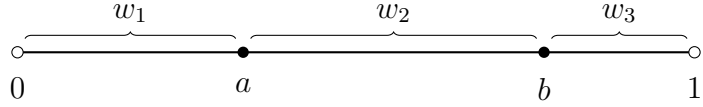
Following the natural order on extents induced by the linear order on $X$, we may interpret the three words as `small', `intermediate', `large'. 

Let $N = \{1,2, \ldots, n\}$ be a finite set of agents. Each agent $i$ has a vocabulary $V_i$ based on the same set of words. An \emph{aggregation rule} for vocabularies $f: V^n \rightarrow V$ maps each profile of $n$ individual vocabularies into a collective vocabulary.

\section{Homogeneous vocabularies}\label{sec3}

A word is \emph{active} for a vocabulary if it has a nonempty extent. We begin with the simpler case of homogeneous vocabularies where each agent has the same active words, although different agents may have different extents for the same word.  

Given $n$ homogeneous individual vocabularies with $m+1$ words, an aggregation rule decides the extents of the $m+1$ words for the collective vocabulary. We focus on aggregation rules that generalize the \emph{endpoint rules} introduced by \citet{mil25}.

Let $s_i = (s_i^1, \ldots, s_i^k, \ldots, s_i^m)$ be the sequence of endpoints for the individual vocabulary $V_i$, arranged in increasing order. Words are also arranged in increasing order: the extent for the first word is $(0,s_i^1)$ and for the $k$-th word is $[s_i^{k-1},s_i^k)$.

An \emph{aggregation rule} $f$ maps the profile $\textbf{s} = \{ s_i \}_{i=1}^n$ of individual endpoint sequences to an ordered sequence $s_c = (s_c^1, \ldots, s_c^k, \ldots, s_c^m)$ of collective endpoints that identifies the collective vocabulary. It is understood that an aggregation rule $f$ must be \emph{consistent}, generating $m+1$ ordered words. Thus, we require that the collective endpoints form a nondecreasing sequence: given $f(\textbf{s})=\left( f^1(\textbf{s}), \ldots, f^m(\textbf{s}) \right)$, we must have $f^1(\textbf{s}) \le \ldots \le f^k(\textbf{s}) \le \ldots \le f^m(\textbf{s})$ for any profile $\textbf{s}$.

We say that the aggregation rule is (component-wise) \emph{separable} if the collective $k$-th endpoint $s_c^k$ depends only on the $n$ individual $k$-th endpoints $(s_1^k, \ldots , s_n^k)$; that is, the aggregation rule $f(\textbf{s})=\left( f^1(\textbf{s}), \ldots, f^m(\textbf{s}) \right)$ may be decomposed into $m$ distinct functions such that  $f^k(\textbf{s}) = f^k (s_1^k, \ldots , s_n^k)$. 

The following are three consistent and separable endpoint rules for homogeneous vocabularies. 

\begin{description}
\item[Average rule:] for each $k$, $f^k_{ave}(\textbf{s}) = \left(\sum_{i=1}^n s_i^k\right)/n$.

\item[Median rule:] for each $k$, $f^k_{med}(\textbf{s}) = \texttt{med} \left(s_1^k, \ldots , s_n^k\right)$, where the median operator \texttt{med} selects the central value from any ordered sequence $(s_1, \ldots, s_n)$, assuming that $n$ is odd.

\item[Dictatorship of agent $i$:] for each $k$, $f^k_{(i)}(\textbf{s})=s^k_i$.

\end{description}

Applied to Example~\ref{ex1}, the mean rule selects the endpoints $\{20, 35, 48.33, 66.67\}$. Note that the last three endpoints are not present in any individual vocabulary. By construction, instead, the other two rules pick only endpoints that appear in at least one individual vocabulary. Specifically, the median rule yields the endpoints $\{20, 40, 55, 70\}$, while the dictatorship of (say) Agent~2 selects the endpoints $\{10, 20, 30,  50\}$. Figure~\ref{fig:aggreg} shows the different vocabularies selected by the three rules.

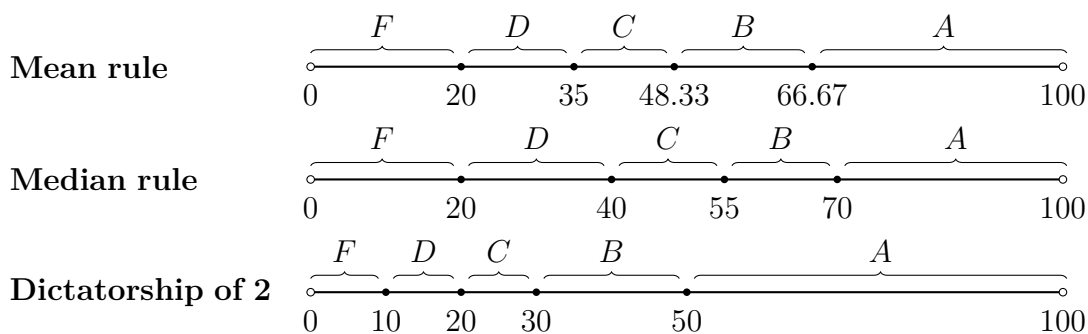
\begin{figure}[!htbp]\centering
	\begin{tikzpicture}[scale=0.1]
		\node[text width=4cm] at (-20,-15) {\textbf{Median rule}};
		\node[text width=4cm] at (-20,0) {\textbf{Mean rule}};
		\node[text width=4cm] at (-20,-30) {\textbf{Dictatorship of 2}};
		
		\draw[decorate, decoration={brace}, yshift=10ex]  (0,0) -- node[above=0.4ex] {$F$}  (19,0);
		\draw[decorate, decoration={brace}, yshift=10ex]  (21,0) -- node[above=0.4ex] {$D$}  (34,0);
		\draw[decorate, decoration={brace}, yshift=10ex]  (36,0) -- node[above=0.4ex] {$C$}  (47.33,0);
		\draw[decorate, decoration={brace}, yshift=10ex]  (49.33,0) -- node[above=0.4ex] {$B$}  (65.67,0);
		\draw[decorate, decoration={brace}, yshift=10ex]  (67.77,0) -- node[above=0.4ex] {$A$}  (100.5,0);
		\draw[-, thick] (-0.5,0) -- (20,0);
		\draw[fill=white] (0,0) circle (5mm) node[below=1mm] {$0$};
		\fill (20,0) circle (5mm) node[below=1mm] {$20$};
		\draw[-, thick] (20,0) -- (35,0);
		\fill (35,0) circle (5mm) node[below=1mm] {$35$};
		\draw[-, thick] (35,0) -- (48.33,0);
		\fill (48.33,0) circle (5mm) node[below=1mm] {$48.33$};
		\draw[-, thick] (48.33,0) -- (66.67,0);F
		\fill (66.67,0) circle (5mm) node[below=1mm] {$66.67$};
		\draw[-, thick] (66.67,0) -- (100.5,0);
		\draw[fill=white] (100,0) circle (5mm) node[below=1mm] {$100$};

		\draw[decorate, decoration={brace}, yshift=10ex]  (0,-15) -- node[above=0.4ex] {$F$}  (19,-15);
		\draw[decorate, decoration={brace}, yshift=10ex]  (21,-15) -- node[above=0.4ex] {$D$}  (39,-15);
		\draw[decorate, decoration={brace}, yshift=10ex]  (41,-15) -- node[above=0.4ex] {$C$}  (54,-15);
		\draw[decorate, decoration={brace}, yshift=10ex]  (56,-15) -- node[above=0.4ex] {$B$}  (69,-15);
		\draw[decorate, decoration={brace}, yshift=10ex]  (71,-15) -- node[above=0.4ex] {$A$}  (100.5,-15);

		\draw[-, thick] (-0.5,-15) -- (20,-15);
		\draw[fill=white] (0,-15) circle (5mm) node[below=1mm] {$0$};
		\fill (20,-15) circle (5mm) node[below=1mm] {$20$};
		\fill (40,-15) circle (5mm) node[below=1mm] {$40$};

		\draw[-, thick] (70,-15) -- (100.5,-15);
		\fill (55,-15) circle (5mm) node[below=1mm] {$55$};
		\fill (70,-15) circle (5mm) node[below=1mm] {$70$};
		\draw[fill=white] (100,-15) circle (5mm) node[below=1mm] {$100$};
		\draw[-, thick] (20,-15) -- (40,-15);
		\draw[-, thick] (40,-15) -- (55,-15);
		\draw[-, thick] (55,-15) -- (70,-15);

	\draw[decorate, decoration={brace}, yshift=10ex]  (0,-30) -- node[above=0.4ex] {$F$}  (9,-30);
\draw[decorate, decoration={brace}, yshift=10ex]  (11,-30) -- node[above=0.4ex] {$D$}  (19,-30);
\draw[decorate, decoration={brace}, yshift=10ex]  (21,-30) -- node[above=0.4ex] {$C$}  (29,-30);
\draw[decorate, decoration={brace}, yshift=10ex]  (31,-30) -- node[above=0.4ex] {$B$}  (49,-30);
\draw[decorate, decoration={brace}, yshift=10ex]  (51,-30) -- node[above=0.4ex] {$A$}  (100.5,-30);

\draw[-, thick] (-0.5,-30) -- (10,-30);
\draw[-, thick] (10,-30) -- (20,-30);
\draw[-, thick] (20,-30) -- (30,-30);
\draw[-, thick] (30,-30) -- (50,-30);
\draw[-, thick] (50,-30) -- (100,-30);

\draw[fill=white] (0,-30) circle (5mm) node[below=1mm] {$0$};
\fill (10,-30) circle (5mm) node[below=1mm] {$10$};
\fill (20,-30) circle (5mm) node[below=1mm] {$20$};

\fill (30,-30) circle (5mm) node[below=1mm] {$30$};
\fill (50,-30) circle (5mm) node[below=1mm] {$50$};
\draw[fill=white] (100,-30) circle (5mm) node[below=1mm] {$100$};

\end{tikzpicture}
	\caption{Collective vocabularies under three aggregation rules}\label{fig:aggreg}
\end{figure}

The three rules deliver quite different vocabularies, but there is some common ground. Notably, the three agents concur in recommending that $[0,10)$ deserves an $F$ grade and $[80,100]$ deserves an $A$ grade. This unanimity in individual vocabularies is upheld by each of the three rules.

A separable endpoint rule need not be consistent. For instance, in the above example, consider the separable aggregation rule where Agent~1 is the dictator for the first two endpoints and Agent~2 is the dictator for the last two endpoints. Then $f^1_{(1)} = 20$ and $f^2_{(1)} = 40$, but $f^3_{(2)} = 30$ and $f^4_{(2)} = 50$, with $20 < 40 \not< 30 < 50$.

Conversely, a consistent rule need not be separable. Here is an example.
\begin{description}
\item[Multiset rule:] first, arrange the $m$ individual endpoints of the $n$ agents into an ordered multiset (including repeated endpoints) with $m \cdot n$ elements; then, divide the ordered multiset into $m$ consecutive groups of $n$ elements and select the median of each group, assuming that $n$ is odd.
\end{description}

Returning to Example~\ref{ex1}, the multiset rule arranges the $4 \times 3 = 12$ individual endpoints in the (ordered) multiset $\{10, 20, \allowbreak 20 \> | \> \allowbreak 30, 30, 40,  \> | \> 45, 50, 55,  \> | \> 60, 70, 80\}$, where the bar separates the $m=4$ consecutive subgroups. Selecting the median from each subgroup, we obtain $\{20, 30, 50, 70\}$.

\subsection{Axioms}

Under homogeneity, each individual vocabulary is identified by an ordered sequence $s^1 < \ldots < s^m$ of endpoints. Given $n$ agents, we juxtapose the individual ordered sequences into a matrix-like arrangement 
$$M = \begin{bmatrix}
s_1^1 & s_1^2 & \cdots & s_1^m\\
s_2^1 & s_2^2 & \cdots & s_2^m\\
\vdots & \vdots & \vdots & \vdots \\
s_n^1 & s_n^2 & \cdots & s_n^m\\
\end{bmatrix}$$
where row $i$ contains the ordered sequence of endpoints for agent~$i$, and column~$k$ lists the $k$-th endpoints for all agents. We write $M_i$ and $M^k$ to denote the $i$-th row and the $k$-th column of $M$ and let ${\cal M}$ be the set of all profiles of individual endpoints.

An aggregation rule $f: {\cal M} \to X^m$ maps each profile of individual endpoints in $\cal M$ into a vector $s_c = (s_c^1, \ldots, s_c^k, \ldots, s_c^m)$ of collective endpoints in $X^m$. We focus on aggregation rules that are separable and consistent; that is, we assume
$$f(M)=\left( f^1(M^1), \ldots, f^m(M^m) \right)$$ 
with $f^1(M^1) \le \ldots \le f^k(M^k) \le \ldots \le f^m(M^m)$. We say that the function $f^k(M^k)$ is the $k$-th \emph{component} of the aggregation rule. 

We consider the following properties. The first one requires that the aggregation rule generates $m$ distinct collective endpoints. This implies that the collective vocabulary attributes a nonempty extent to each of the $m+1$ words; in short, all the original words remain active.

\smallskip\noindent
\textbf{Strict consistency.} For any $M$, $f^1(M^1) < \ldots < f^k(M^k) < \ldots < f^m(M^m)$.
\smallskip

The second property requires that, if all agents agree on the same $k$-th endpoint, this must be chosen as the $k$-th collective endpoint. Consequently, if each individual vocabulary attributes the same extent to a specific word, this carries over to the collective vocabulary.

\smallskip\noindent
\textbf{Unanimity.} If $s^k_i = s$ for all $i$, then $f^k(M^k) = s$.
\smallskip

The third property requires that agents are indistinguishable. Therefore, swapping rows in $M$ does not affect the outcome of the aggregation. Given a permutation $\pi$ over the set of agents $N$ and a profile $M$ in $\cal M$, denote by $\pi M$ in $\cal M$ the profile obtained by applying $\pi$ over the rows of $M$. We call it anonymity, but this property is often known as symmetry in the theory of aggregation functions.

\smallskip\noindent
\textbf{Anonymity.} For every permutation $\pi$ and any $M$, $f(M) = f(\pi M)$.
\smallskip

The fourth property requires that if all individual endpoints are modified by the same increasing transformation, then the collective endpoints change accordingly. Equivalently, all endpoints (individual or collective) are expressed in the same ordinal scale. This property is called weak neutrality by \citet{mil25} and ordinal covariance by \citet{cha07}. The literature on aggregation functions speaks of ordinal stability or ordinal scale invariance. Leaving ordinal implicit, we call it stability.

Let $\Phi^\uparrow$ be the set of all increasing bijections $\varphi: X \to X$. Abusing notation, we denote by $\varphi (M)$ the profile in $\cal M$ obtained after applying the same transformation $\varphi$ to each element in $M$; analogously, let $\varphi (s_c)$ denote the ordered sequence of endpoints obtained after applying $\varphi$ to each element in $s_c$.

\smallskip\noindent
\textbf{Stability.} For every $\varphi \in \Phi^\uparrow$, $\varphi(f(M)) = f(\varphi (M))$.
\smallskip

The last property is a continuity assumption, where the domain and the range of $f$ are endowed with the standard topologies. Under separability, this implies that changes in the collective $k$-th endpoint stay small after suitably small changes in the corresponding individual $k$-th endpoints. An equivalent formulation is to drop continuity and assume stability after replacing $\Phi^\uparrow$ with the superset $\Phi^{c\uparrow}$ of all continuous nondecreasing surjections $\varphi: X \to X$; see \citet[Prop.~5.4]{mar04}.

\smallskip\noindent
\textbf{Continuity.} The aggregation rule $f$ is continuous. 

\subsection{Main result}

Given a vector $\textbf{x}=(x_1, \ldots, x_n)$ in $\RR^n$, let $x_{(1)} \le \ldots \le x_{(k)} \le \ldots \le x_{(n)}$ be an arrangement of its components in nondecreasing order. For any $k= 1, \ldots, n$, the $k$-th \emph{order statistic} $\texttt{os}_k: \RR^n \to \RR$ is the function $\texttt{os}_k (\textbf{x}) = x_{(k)}$.

We define a class of separable aggregation rules called \emph{$p$-rules}, where $p$ is a mnemonic for `positional'. Let $\textbf{p} = \{ p_1, \ldots, p_k, \ldots, p_m\}$ be a nondecreasing sequence of $m$ (non-zero) integers with $p_m \le n$. Given $M$, the $p$-rule sets the collective $k$-th endpoint equal to the individual endpoint in the $p_k$-th position of $M^k$. That is, $$f^{\textbf{p}}(M) = \left( M^1_{(p_1)}, M^2_{(p_2)}, \ldots, M^m_{(p_m)} \right)$$
with $1 \le p_1 \le p_2 \le \ldots \le p_m \le n$. A $p$-rule designates as the $k$-th collective endpoint the $p_k$-th endpoint  among all the $k$-th individual endpoints arranged in nondecreasing order. 

The restriction $p_k \le p_{k+1}$ for $k=1, \ldots, n-1$ is needed to ensure strict consistency for any $k$. By the properties of order statistics, $M^k \ll M^{k+1}$ (in the component-wise order) implies $M^k_{(p_k)} < M^{k+1}_{(p_k)}$. The restriction $p_k \le p_{k+1}$ implies $M^{k+1}_{(p_k)} \le M^{k+1}_{(p_{k+1})}$. Combining the two inequalities, $M^k_{(p_k)} < M^{k+1}_{(p_{k+1})}$ follows. The next theorem follows from known results in aggregation theory; see Appendix~\ref{app1} for details.

\begin{thm}\label{thm1}
A separable aggregation rule for homogeneous vocabularies is strictly consistent, unanimous, anonymous, stable and continuous if and only if it is a $p$-rule.
\end{thm}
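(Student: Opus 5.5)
For sufficiency, I would check each axiom for a $p$-rule directly. Separability holds by construction. Unanimity holds because every order statistic of a constant vector equals that constant. Anonymity holds because order statistics are symmetric functions. Continuity holds because each $\texttt{os}_k$ is continuous on $\RR^n$; one way to see this is to write it as a max of mins over $k$-subsets. Stability holds because an increasing bijection $\varphi$ preserves the ordering of the coordinates, so $\texttt{os}_k(\varphi(\textbf{x}))=\varphi(\texttt{os}_k(\textbf{x}))$. Strict consistency is the argument given just before the statement: the domain requires each row to be strictly increasing, so $M^k \ll M^{k+1}$; hence $M^k_{(p_k)}<M^{k+1}_{(p_k)}\le M^{k+1}_{(p_{k+1})}$.

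For necessity, I would argue component by component. Fix $k$ and view $f^k$ as a function $X^n\to X$. I must first check that every vector in $X^n$ actually occurs as a column $M^k$ of some admissible profile. This holds because $X$ is open, so each row can be completed strictly increasingly on both sides of any prescribed $k$-th entry.

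On $X^n$, the component $f^k$ is continuous and ordinally stable. Unanimity gives idempotency, and anonymity gives symmetry. I would then invoke the known characterization from aggregation theory (Marichal; Grabisch et al., Chapter~8). It states that a continuous function $F:X^n\to X$ with $F(\varphi(\textbf{x}))=\varphi(F(\textbf{x}))$ for all $\varphi\in\Phi^\uparrow$ is a lattice polynomial (a Boolean max-min function) of the variables; under idempotency no constant terms are needed. A symmetric lattice polynomial is a max of mins over all $j$-subsets for a fixed $j$, which is exactly an order statistic. So $f^k=\texttt{os}_{p_k}$ for some $p_k\in\{1,\ldots,n\}$.

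If I had to prove the stability step by hand rather than cite it, I would do the following. For each strict ordering pattern of the coordinates, stability forces $F(\textbf{x})$ to be one of the coordinates $x_j$, with $j$ depending only on the pattern. Otherwise $F(\textbf{x})$ lies strictly between consecutive coordinate values (or outside them), and some $\varphi$ fixing all $x_i$ but moving that gap yields a contradiction. Continuity then extends this to ties, and symmetry forces the selected rank to be the same across all patterns. I expect this reduction to be the main technical point. Continuity is what excludes rules that pick different ranks on different ordering patterns while remaining stable.

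Finally, I would show that strict consistency forces $p_1\le\cdots\le p_m$. Suppose $p_{k+1}<p_k$ for some $k$. Build a profile in which $n-p_{k+1}+1$ agents have $k$-th and $(k+1)$-th endpoints near $1$, and the remaining $p_{k+1}-1$ agents have them near $0$, with $s^k<s^{k+1}$ in each row. Then $\texttt{os}_{p_k}(M^k)$ is near $1$ because $p_k>p_{k+1}-1$, while $\texttt{os}_{p_{k+1}}(M^{k+1})$ is near $0$ or can be made smaller than the former. Concretely, place the high-group $(k+1)$-th endpoints just above their $k$-th endpoints and choose values so that $M^{k+1}_{(p_{k+1})}<M^k_{(p_k)}$. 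This violates strict consistency and yields the $p$-rule form.
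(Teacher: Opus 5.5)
Your proposal is correct and follows the paper's route: the paper likewise applies the characterization of order statistics as the unanimous, anonymous, stable and continuous functions (its Proposition~\ref{prop11}) to each component, and then appeals to strict consistency for $p_1\le\cdots\le p_m$. The paper only asserts that second step (its pre-theorem argument actually proves the converse), and your explicit profile fills it in, as does your check that every vector of $X^n$ occurs as a column. One small slip in that profile: with exactly $p_{k+1}-1$ low agents, $M^{k+1}_{(p_{k+1})}$ is the smallest high-group entry rather than near $0$, but your ``concretely'' repair works (the high group has at least two members since $p_{k+1}<p_k\le n$), and using $p_{k+1}$ low agents instead makes the violation immediate.
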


Appendix~\ref{app2} shows that the four axioms in Theorem~\ref{thm1} are independent.

\subsection{Interval aggregation}\label{sec33}

Motivated by the study of the legal standard of tolerance, \citet{mil25} considers a problem of interval aggregation. In his setup (which has inspired ours), a \emph{standard} of tolerance is the interval on the real line encompassing all those cases that are deemed acceptable. Each agent has her own standard. The question is to find an aggregation rule that inputs individual standards and outputs one collective standard; or, to put it differently, to aggregate $n$ individual intervals into one collective interval. \citet{mil25} characterizes a family of aggregation methods, aptly called \emph{endpoint rules}, for $m=2$.

Interval aggregation is a special case of vocabulary aggregation. A visual example should suffice to see why. Consider Figure~\ref{fig:five}, where the individual standard of an agent is the thick interval $(a,b)$, identified by the two endpoints $a,b$. 
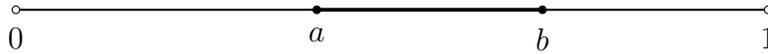
\begin{figure}[!htbp]
\centering
	\begin{tikzpicture}[scale=0.1]
		\draw[-, thick] (-0.5,30) -- (40,30);
		\draw[-, line width=0.5mm] (40,30) -- (70,30);
		\draw[-, thick] (70,30) -- (100.5,30);
		\draw[fill=white] (0,30) circle (5mm) node[below=1mm] {$0$};
		\draw[fill=black] (40,30) circle (5mm) node[below=1mm] {$a$};
		\draw[fill=black] (70,30) circle (5mm) node[below=1mm] {$b$};
		\draw[fill=white] (100,30) circle (5mm) node[below=1mm] {$1$};
		\end{tikzpicture}
	\caption{An individual interval}\label{fig:five}
\end{figure}
The same two endpoints identify a vocabulary formed by the three words $w_1=(0,a),  w_2: [a,b), w_3=[b,1)$,  interpreted as `under-standard', `standard', and `over-standard'. (These names are suggestive, but otherwise irrelevant.) Hence, the problem of aggregating $n$ individual intervals into one collective interval is equivalent to aggregating $n$ individual 3-word vocabularies into a collective 3-word vocabulary. 

Under the restriction $m=2$, \citet{mil25} defines a subclass of symmetric aggregation rules for interval aggregation. We show how to generalize them to vocabulary aggregation when $m$ is even (regardless of $n$), or when $m$ is odd and $n$ is even. For short, we refer to this as the \emph{parity condition}.

A $p$-rule is \emph{symmetric} if $p_k + p_{m-k+1} = n + 1$, for all $k=1, \ldots, m$. The symmetry refers to the property that the twin positions $p_k$ and $p_{m-k+1}$ are spaced $k-1$ items away from the first and from the last endpoint, respectively. If one reverses the direction of the line, the reversed positions would match the original positions under the initial order. 

An example may be useful. Suppose $m=3$ endpoints and $n=5$ agents over the interval $(0,11)$, with the individual endpoints arranged in the matrix
$$M^< = \begin{bmatrix}
1 & 4 & 9\\
2 & 6 & 7\\
3 & 4 & 10\\
4 & 5 & 6\\
5 & 6 & 8
\end{bmatrix}$$
Consider the $p$-rule $(2,3,4)$, where $p_1 + p_3 = 2 + 4 = 6 = n+1$. Under the linear order $<$, this $p$-rule selects the three endpoints $M^1_{(2)} = 2$, $M^2_{(3)} = 5$, and $M^3_{(4)} = 9$. That is, the first collective endpoint is the second lowest among the first individual endpoints, while the last collective endpoint is the fourth lowest among the last individual endpoints.

Reverse the order throughout from $<$ to $>$, so that what used to read ``higher'' (to the right) is now ``lower'' (to the left). The individual endpoints can be arranged in a new (order-reversed) matrix
$$M^> = \begin{bmatrix}
9 & 4 & 1\\
7 & 6 & 2\\
10 & 4 & 3\\
6 & 5 & 4\\
8 & 6 & 5
\end{bmatrix}$$
Under the new order $>$, the same $p$-rule $(2,3,4)$ with $p_1 + p_3 = n+1$ selects the first collective endpoint as the second \emph{highest} (and fourth lowest) among the first individual endpoints, while the last collective endpoint is the fourth \emph{highest} (and second lowest) among the last individual endpoints. The collective endpoints are $(9, 5, 2)$. Because the aggregation rule is symmetric, the selection of collective endpoints remains unchanged under the original order $<$ or the reversed order $>$.

In particular, for $m=2$, the symmetric rule that sets $p_1 = \lfloor \frac{n+1}{2} \rfloor$ and $p_2 = \lceil \frac{n+1}{2} \rceil$ recovers the median rule suggested by \citet{mil25} and others. More generally, under the parity condition, we define the \emph{median rule} by setting $p_k = \lfloor \frac{n+1}{2} \rfloor$ for $k = 1, \ldots, \lfloor \frac{m+1}{2} \rfloor$ and $p_k = \lceil \frac{n+1}{2} \rceil$ for $k = \lceil \frac{m+1}{2} \rceil, \ldots, m$.

To characterize the symmetric endpoint rules, we need a stronger notion of stability that replaces $\Phi^\uparrow$ with the superset $\Phi^\updownarrow$ of all strictly monotonic bijections $\varphi: X \to X$. While stability expects the aggregation rule to be order-preserving, strong stability adds the requirement that the rule is invariant to order inversions.\footnote{The convention about left endpoints is independent of strong stability: it is used after deriving the collective endpoints only to attribute each of them to one word.}

\smallskip\noindent
\textbf{Strong stability.} For every $\varphi \in \Phi^\updownarrow$, $\varphi(f(M)) = f(\varphi (M))$.
\smallskip

The next result is a close analog of Theorem~\ref{thm1}, where strong stability replaces stability. 
\begin{thm}\label{thm2}
Assume the parity condition. A separable aggregation rule for homogeneous vocabularies is strictly consistent, unanimous, anonymous, strongly stable and continuous if and only if it is a symmetric $p$-rule.
\end{thm}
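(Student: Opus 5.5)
The plan is to reduce Theorem~\ref{thm2} to Theorem~\ref{thm1} and then use one order-reversing transformation to pin down the positions. Every $\varphi \in \Phi^\uparrow$ also belongs to $\Phi^\updownarrow$, so strong stability implies stability. Hence a separable rule satisfying the axioms of Theorem~\ref{thm2} satisfies those of Theorem~\ref{thm1}, and it is a $p$-rule $f^{\textbf{p}}$ with $1 \le p_1 \le \ldots \le p_m \le n$. It remains to show two things: strong stability holds for $f^{\textbf{p}}$ exactly when $p_k + p_{m-k+1} = n+1$ for all $k$, and every symmetric $p$-rule satisfies all the axioms.

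The key step is to compute what a decreasing bijection $\psi \in \Phi^\updownarrow \setminus \Phi^\uparrow$ does. One exists because $X$ is an open interval: use a reflection if $X$ is bounded, and compose with an increasing bijection onto $\RR$ and $x\mapsto -x$ otherwise. The convention needs care here. Applying $\psi$ to a row $s_i^1<\ldots<s_i^m$ yields a decreasing list, so, as in the $M^>$ illustration, the profile $\psi(M)$ is read with its $k$-th endpoint equal to $\psi(s_i^{m-k+1})$. Thus the $k$-th column of $\psi(M)$ is $\psi(M^{m-k+1})$, and likewise the $k$-th entry of $\psi(f(M))$ is $\psi(f^{m-k+1}(M))$. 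Since $\psi$ reverses order statistics, $\texttt{os}_{j}(\psi(\textbf{x})) = \psi(\texttt{os}_{n+1-j}(\textbf{x}))$, and we get
$$f^k(\psi(M)) = \psi\left(M^{m-k+1}_{(n+1-p_k)}\right), \qquad \left[\psi(f(M))\right]^k = \psi\left(M^{m-k+1}_{(p_{m-k+1})}\right).$$
Strong stability for this $\psi$ therefore says that $M^{m-k+1}_{(n+1-p_k)} = M^{m-k+1}_{(p_{m-k+1})}$ for every $M$. Choosing a profile whose columns have pairwise distinct entries, which is possible by taking rows that are small translates of a common increasing sequence, forces $p_k + p_{m-k+1} = n+1$.

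For the converse, let $\textbf{p}$ be symmetric. It is in particular a $p$-rule, so Theorem~\ref{thm1} gives strict consistency, unanimity, anonymity, stability and continuity. For decreasing $\varphi$, the display above together with $n+1-p_k = p_{m-k+1}$ gives $f(\varphi(M)) = \varphi(f(M))$. Every element of $\Phi^\updownarrow$ is either increasing or decreasing, so strong stability follows. The parity condition enters only at the middle index $k=m-k+1$ when $m$ is odd, where symmetry requires $2p_k=n+1$. I would check at this point that symmetric nondecreasing sequences $\textbf{p}$ exist under the stated condition, for instance the median rule defined above, so that the characterized class is nonempty.

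I expect the main obstacle to be the bookkeeping under order reversal. The statement $\varphi(f(M)) = f(\varphi(M))$ implicitly re-sorts both the individual rows and the collective sequence after a decreasing $\varphi$. This reindexing $k \mapsto m-k+1$ of columns is exactly what produces the pairing of $p_k$ with $p_{m-k+1}$ rather than with itself. I would make this convention explicit at the outset and then verify the order-statistic identity carefully, including at ties.
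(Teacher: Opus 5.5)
Your proposal is correct and takes essentially the same route as the paper. Strong stability implies stability, so Theorem~\ref{thm1} gives a $p$-rule; a single decreasing bijection then sends column $k$ to column $m-k+1$ and the $p$-th order statistic to the $(n+1-p)$-th, which forces $p_k + p_{m-k+1} = n+1$. Your existence argument for a decreasing bijection and your distinct-entries profile fill in details the paper leaves implicit.

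As a side note, the equivalence itself never uses the parity condition. Your planned nonemptiness check would in fact fail under the paper's literal wording when $m$ is odd and $n$ is even, since the middle index needs $2p_{(m+1)/2} = n+1$ and hence $n$ odd. This does no harm, because nonemptiness is not part of the claim.
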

\begin{proof}
In light of Theorem~\ref{thm1}, it suffices to prove that for a $p$-rule strong stability implies symmetry. The opposite direction is trivial. Let $\Phi^\downarrow$ be the set of all decreasing bijections $\varphi: X \to X$. Clearly, $\Phi^\updownarrow = \Phi^\uparrow \cup \Phi^\downarrow$ and $\Phi^\uparrow \cap \Phi^\downarrow = \emptyset$. Applying $\varphi$ in $\Phi^\downarrow$ to any ordered sequence of $m$ elements reverses its order, placing its $k$-th element in the $(m-k+1)$-th position. 

Assume $\varphi (f (M)) = f (\varphi (M))$ for some $\varphi$ in $\Phi^\downarrow$. Given a $p$-rule $f$, $\varphi (f (M))$ sends the $k$-th collective endpoint $f^k (M^k)$ in the $(m-k+1)$-th position as $\varphi (f^k (M^k))$ and has the $\varphi$-transform of the lower $p_{(k)}$-th individual endpoint from $M^k$ as its $(m-k+1)$-th component. 

On the other hand, because $\varphi (M)$ swaps the $k$-th column with the $(m-k+1)$-th column of $M$, the $(m-k+1)$-th component of $f (\varphi (M))$ picks the lower $p_{(m-k+1)}$-th individual endpoint from $\varphi (M^{k})$. Removing the order reversal, this is the same as the higher $p_{(m-k+1)}$-th endpoint from $M^k$. Then, because there are $n$ individual endpoints, the higher $p_{(m-k+1)}$-th individual endpoint is equal to the lower $\left( n+1-p_{(m-k+1)}\right)$-th endpoint. Hence, strong stability requires $p_{(k)} = n+1-p_{(m-k+1)}$, or $p_{(k)} + p_{(m-k+1)} = n +1$.
\end{proof}

\section{Heterogeneous vocabularies}\label{sec4}

When vocabularies are homogeneous, all agents share the same active words, possibly with different extents. In general, however, agents may disregard some words and attribute them empty extent.

We say that vocabularies are heterogeneous when agents may disagree not only on the extent, but also on which words are active. Assume that each individual vocabulary has at least two active words.\footnote{A one-word vocabulary covers the whole of $X$: it is uninformative, and hence irrelevant.} Note that, by the underlying linear order, all agents  agree on the order of words, regardless of which they actively use; for instance, a professor who never uses the grade $D$ agrees that it precedes $F$ and follows $C$.

The example in Figure~\ref{fig-6} depicts a vocabulary using three words out of a vocabulary of four: $w_1 \prec w_2 \prec w_3 \prec w_4$. We say that the vocabulary is \emph{partial} because the word $w_3$ is missing or, more precisely, its extent is empty. 
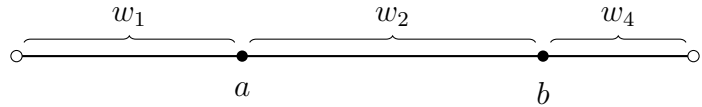
\begin{figure}[!ht]
\vspace{2mm}
\centering
\begin{tikzpicture}
	\draw[>={Latex[width=4mm]}, thick] (0, 0) -- (9, 0);
	\draw[fill=white] (0,0) circle (0.75mm);
	\draw[decorate, decoration={brace}, yshift=1ex]  (0.1,0) -- node[above=0.4ex] {$w_1$}  (2.9,0);
	\fill[black] (3,0) circle (0.75mm) node[below=2mm] {$a$};
	\draw[decorate, decoration={brace}, yshift=1ex]  (3.1,0) -- node[above=0.4ex] {$w_2$}  (6.9,0);
	\fill[black] (7,0) circle (0.75mm) node[below=2mm] {$b$};
	\draw[decorate, decoration={brace}, yshift=1ex]  (7.1,0) -- node[above=0.4ex] {$w_4$}  (8.9,0);
	\draw[fill=white] (9,0) circle (0.75mm);
\end{tikzpicture}
\caption{A partial vocabulary}\label{fig-6}
\end{figure}

For a partial vocabulary, the implicit correspondence between words and endpoints seems to break down: a mere list of the two endpoints $\{a,b\}$ identifies the three intervals $(0,a)$, $(a,b)$ and $(b,1)$, but cannot pinpoint how they associate with three of the four words in the full vocabulary. Nonetheless, we can restore the duality between words and endpoints using the common order structure. 

Intuitively, it is enough to associate the missing word $w_3$ with the (empty) interval $(b,b)$ identified by two copies of the endpoint $b$. Thus, the list of endpoints associated with the partial vocabulary in Figure~\ref{fig-6} is the (ordered) multiset $\{a, b, b \}$. Analogously, if the partial vocabulary has the nontrivial words $\{ w_1, w_3, w_4 \}$, the list of its endpoints is $\{a, a, b \}$. If the first or last word is missing, we associate it respectively with the interval $(0,0)$ or $[1,1)$; that is, we add the boundary points of the domain as possible endpoints. Similarly, the partial 2-word vocabulary $\{w_2, w_4 \}$ is uniquely associated with the (ordered) 3-element multiset $(0, a,a)$, where $a$ is the endpoint separating $w_2$ and $w_4$. See Figure~\ref{fig-7}.
\begin{figure}[!ht]
\vspace{2mm}
\centering
\begin{tikzpicture}
	\draw[>={Latex[width=4mm]}, thick] (0, 0) -- (9, 0);
	\draw[fill=white] (0,0) circle (0.75mm) node[below=2mm] {$0$};
	\draw[decorate, decoration={brace}, yshift=1ex]  (0.1,0) -- node[above=0.4ex] {$w_2$}  (3.9,0);
	\fill[black] (4,0) circle (0.75mm) node[below=2mm] {$a$};
	\draw (4,0) node[below=7mm] {$a$};
	\draw[decorate, decoration={brace}, yshift=1ex]  (4.1,0) -- node[above=0.4ex] {$w_4$}  (8.9,0);
	\draw[fill=white] (9,0) circle (0.75mm);
\end{tikzpicture}
\caption{A partial 2-word vocabulary and its three endpoints}\label{fig-7}
\end{figure}
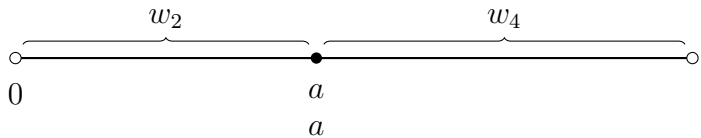

Formally, let $m+1$ be the number of available words and assume a partial vocabulary with $\ell +1 \le m+1$ words. The duality between words and endpoints uniquely associates to each $\ell$-word vocabulary a different multiset of $m$ endpoints, chosen from the closure of the set $X$. Endpoints are \emph{proper} if they belong to $X$. 

With $\ell = m$ endpoints, all words are active and the multiset has $m$ proper endpoints with no multiple instances. When $\ell < m$, some words  are inactive and we increase the multiplicity of their endpoints to account for the inactive words. Because any vocabulary $V$ has at least two words, there is at least a proper endpoint $a \in (0,1)$. For any inactive word $w_i$ before the first active word, we set $w_i = (\inf X, \inf X)$ and increase the multiplicity of this improper endpoint by~1. For any inactive word $w_i$ between two active words over adjacent intervals $(a,b)$ and $(b,c)$, we set $w_i = (b,b)$ and increase the multiplicity of the endpoint $b$ by~1. Finally, for any inactive word $w_i$ after the last active word, we set $w_i = (\sup X, \sup X)$ and increase the multiplicity of this improper endpoint by~1.

This procedure restores the duality. Every vocabulary with $m+1$ words (partial or not) is associated with a (unique) ordered sequence of exactly $m$ endpoints $s^1 \le s^2 \le \ldots \le s^m$, chosen from $X \cup \{ \inf X, \sup X \}$. With respect to the case of homogeneous vocabularies, there are three changes: 1) $\inf X$ and $\sup X$ can be endpoints; 2) the endpoints form a multiset, whose multiplicities are associated with the number of inactive words; 3) when arranged in an ordered sequence, inequalities between endpoints are weak.

\subsection{Results}

When vocabularies are partial, the duality between words and endpoints is preserved after expanding the set of endpoints from $X$ to $\overbar{X} = X \cup \{ \inf X, \sup X \}$ and coding missing intervals by means of multisets of endpoints. Because endpoints can have multiplicity larger than one, the ordered sequence from a multiset satisfies $s_1 \le s_2 \le \ldots \le s_m$ with weak inequalities. 

Let $\overbar{\cal M}$ be the set of profiles of individual endpoints from $\overbar{X}$. An aggregation rule $f: \overbar{\cal M} \to X^m$ maps each profile $M$ of individual endpoints from $\overbar{X}$ into a vector $s_c = (s_c^1, \ldots, s_c^k, \ldots, s_c^m)$ of collective endpoints in $\overbar{X}^m$. We keep our focus on aggregation rules that are separable and consistent.

Clearly, partial individual vocabularies may aggregate into a partial collective vocabulary. This is also a consequence of unanimity: if all agents agree on the same partial vocabulary, so must do the collective vocabulary. Therefore, we relax the requirement of strict consistency used for homogeneous vocabularies to (weak) consistency.

\smallskip\noindent
\textbf{Consistency.} For any $M$, $f^1(M^1) \le \ldots \le f^k(M^k) \le \ldots \le f^m(M^m)$.
\smallskip

Theorem~\ref{thm1} straightforwardly extends to heterogeneous vocabularies. The analog holds for Theorem~\ref{thm2} under the parity condition, but we do not pursue it here.

\begin{thm}\label{thm3}
A separable aggregation rule for vocabularies is consistent, unanimous, anonymous, stable and continuous if and only if it is a $p$-rule.
\end{thm}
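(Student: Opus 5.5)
The plan is to reduce Theorem~\ref{thm3} to the same aggregation-theoretic characterization that underlies Theorem~\ref{thm1}. The only differences are that each component now lives on the closed set $\overbar{X}$ and that strict consistency is replaced by weak consistency.

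\emph{Sufficiency.} We check the five properties for a $p$-rule with $1\le p_1\le\ldots\le p_m\le n$.
\begin{itemize}
\item Each component $M^k\mapsto M^k_{(p_k)}$ is an order statistic. It is therefore symmetric, which gives anonymity.
\item It is idempotent, which gives unanimity.
\item It is continuous on $\overbar{X}^n$.
\item It commutes with every increasing bijection $\varphi$ of $X$, extended to $\overbar{X}$ by fixing $\inf X$ and $\sup X$. This is stability.
\item Every row of $M$ is nondecreasing, so $M^k\le M^{k+1}$ componentwise. Monotonicity of order statistics then gives $M^k_{(p_k)}\le M^{k+1}_{(p_k)}\le M^{k+1}_{(p_{k+1})}$, which is consistency.
\end{itemize}

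\emph{Necessity.} Fix a component $f^k:\overbar{X}^n\to\overbar{X}$. Here I would invoke the same known result used for Theorem~\ref{thm1} (Appendix~\ref{app1}, after \citet{mar04} and \citet[Chapter~8]{gra09}): a continuous function on a real interval that is invariant under all increasing bijections is a lattice polynomial. Adding symmetry and idempotency (unanimity) forces it to be an order statistic $\texttt{os}_{p_k}$.

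The point to handle is the boundary. Increasing bijections of $X$ fix the two improper endpoints, and those are exactly the constants a lattice polynomial could a priori contain. Idempotency, however, rules out constant terms. Concretely, I would first restrict $f^k$ to $X^n$. There it coincides with a restriction of the homogeneous-case map, because the constraint $s^1<\ldots<s^m$ is irrelevant for a single column: any vector in $X^n$ can be a column of a homogeneous profile. So $f^k=\texttt{os}_{p_k}$ on $X^n$. Continuity on $\overbar{X}^n$, with $X^n$ dense in it, then extends this identity to all of $\overbar{X}^n$.

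Finally, consistency must force $p_k\le p_{k+1}$. Suppose instead that $p_k>p_{k+1}$. Choose columns $M^k=M^{k+1}$ consisting of distinct values; this is allowed, since equal consecutive endpoints encode an inactive word. Then $M^k_{(p_k)}>M^{k+1}_{(p_{k+1})}$, which violates consistency.

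The main obstacle is making sure the appeal to the aggregation-theory result is legitimate on the closed domain $\overbar{X}$. The density and continuity argument handles this. I also need the restriction of $f^k$ to $X^n$ to be defined for every vector in $X^n$, and this holds because the heterogeneous domain contains all homogeneous profiles.
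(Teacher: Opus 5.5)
Your proposal is correct and follows the paper's route, which simply extends Theorem~\ref{thm1}: Proposition~\ref{prop11} pins down each component as an order statistic, and consistency forces $p_1\le\cdots\le p_m$. Your equal-columns profile makes the necessity of monotone $p$ explicit. Your restriction to $X^n$ plus density is a concrete substitute for the paper's remark that, with unanimity, the lattice-polynomial characterization still holds on a non-open domain.

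One small check is missing before you invoke Proposition~\ref{prop11} on $X^n$: you must show that $f^k$ maps $X^n$ into $X$ rather than into $\overbar{X}$. Suppose $f^k(\mathbf{x})=\inf X$ for some $\mathbf{x}\in X^n$. Stability gives $f^k(\varphi(\mathbf{x}))=\inf X$ for every increasing bijection $\varphi$ of $X$. Choosing bijections that squeeze $\mathbf{x}$ toward some $(c,\ldots,c)$ with $c\in X$, continuity gives $f^k(c,\ldots,c)=\inf X\neq c$, which contradicts unanimity. The case $\sup X$ is symmetric.
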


\subsection{Majoritarian consensus}

An example illustrating vocabulary aggregation over partial vocabularies may be useful.

\begin{example}\label{ex-2}
Consider a vocabulary with eight words $w_1 \prec w_2 \prec \ldots \prec w_8$. There are three agents with partial vocabularies, depicted in the top three lines of Figure~\ref{fig-8}. The active words in the first 3-item vocabulary are $w_1$ on $(0,c)$, $w_2$ on $[c,h)$ and $w_3$ on $[h,1)$. The active words in the second 4-item vocabulary are $w_2$ on $(0,b)$, $w_4$ on $[b,d)$, $w_6$ on $[d,g)$, and $w_8$ on $[g,1)$. The active words in the third 5-item vocabulary are $w_1$ on $(0,a)$, $w_3$ on $[a,e)$, $w_5$ on $[e,f)$, $w_6 $ on $[f,1)$, and $w_7$ on $[i,1)$. We depict all the endpoints in the fourth row: the alphabetic order of their names matches their ordered positions on the line.
 
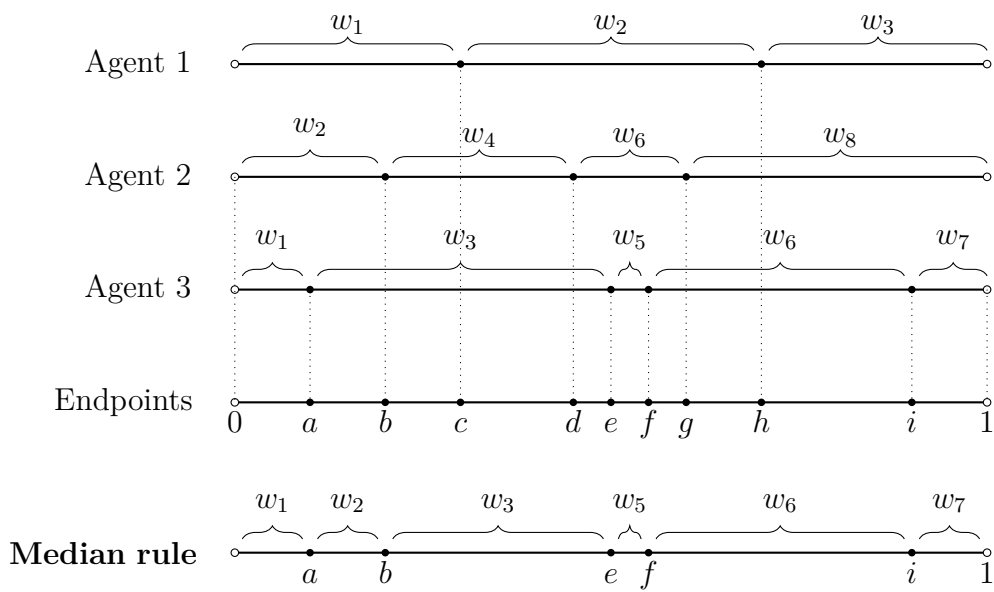
\begin{figure}[!ht]
\centering
\begin{tikzpicture}[scale=0.1]
\draw[-, thick] (-0.5, 0) -- (100.5, 0);
\draw[-, thick] (-0.5, 15) -- (100.5, 15);
\draw[-, thick] (-0.5, 30) -- (100.5, 30);
\draw[-, thick] (-0.5, -15) -- (100.5, -15);		
\draw[-, thick] (-0.5, -35) -- (100.5, -35);

\node[text width=2cm] at (-10,30) {Agent $1$};
\node[text width=2cm] at (-10,15) {Agent $2$};
\node[text width=2cm] at (-10,00) {Agent $3$};
\node[text width=2.8cm] at (-10,-15) {Endpoints};
\node[text width=4cm] at (-10,-35) {\textbf{Median rule}};

\draw[decorate, decoration={brace, amplitude=5pt}, yshift=10ex] (1,30) -- (29,30) node[midway, above=0.4ex] {$w_1$};
\draw[decorate, decoration={brace, amplitude=5pt}, yshift=10ex] (31,30) -- (69,30) node[midway, above=0.4ex] {$w_2$};
\draw[decorate, decoration={brace, amplitude=5pt}, yshift=10ex] (71,30) -- (100,30) node[midway, above=0.4ex] {$w_3$};

\draw[decorate, decoration={brace, amplitude=5pt}, yshift=10ex] (1,15) -- (19,15) node[midway, above=6pt] {$w_2$};
\draw[decorate, decoration={brace, amplitude=5pt}, yshift=10ex] (21,15) -- (44,15) node[midway, above=0.4ex] {$w_4$};
\draw[decorate, decoration={brace, amplitude=5pt}, yshift=10ex] (46,15) -- (59,15) node[midway, above=0.4ex] {$w_6$};
\draw[decorate, decoration={brace, amplitude=5pt}, yshift=10ex] (61,15) -- (100,15) node[midway, above=0.4ex] {$w_8$};

\draw[decorate, decoration={brace, amplitude=5pt}, yshift=10ex] (1,0) -- (9,0) node[midway, above=6pt] {$w_1$};
\draw[decorate, decoration={brace, amplitude=5pt}, yshift=10ex] (11,0) -- (49,0) node[midway, above=6pt] {$w_3$};
\draw[decorate, decoration={brace, amplitude=5pt}, yshift=10ex] (51,0) -- (54,0) node[midway, above=6pt] {$w_5$};
\draw[decorate, decoration={brace, amplitude=5pt}, yshift=10ex] (56,0) -- (89,0) node[midway, above=6pt] {$w_6$};
\draw[decorate, decoration={brace, amplitude=5pt}, yshift=10ex] (91,0) -- (100,0) node[midway, above=6pt] {$w_7$};

\draw[fill=white] (0, 0) circle (5mm);
\draw[fill=white] (0, 15) circle (5mm);
\draw[fill=white] (0, 30) circle (5mm);
\fill (10, 0) circle (5mm);
\fill (20,15) circle (5mm);
\fill (30,30) circle (5mm);
\fill (45,15) circle (5mm);
\fill (50,0) circle (5mm);
\fill (55,0) circle (5mm);
\fill (60,15) circle (5mm);
\fill (70,30) circle (5mm);
\fill (90,0) circle (5mm);
\draw[fill=white] (100,0) circle (5mm);
\draw[fill=white] (100,30) circle (5mm);
\draw[fill=white] (100,15) circle (5mm);

\draw[fill=white] (0,-15) circle (5mm) node[below=4mm,anchor=base] {$0$};
\fill (10,-15) circle (5mm) node[below=4mm,anchor=base] {$a$};
\fill (20,-15) circle (5mm) node[below=4mm,anchor=base] {$b$};
\fill (30,-15) circle (5mm) node[below=4mm,anchor=base] {$c$};
\fill (45,-15) circle (5mm) node[below=4mm,anchor=base] {$d$};
\fill (50,-15) circle (5mm) node[below=4mm,anchor=base] {$e$};
\fill (55,-15) circle (5mm) node[below=4mm,anchor=base] {$f$};
\fill (60,-15) circle (5mm) node[below=4mm,anchor=base] {$g$};
\fill (70,-15) circle (5mm) node[below=4mm,anchor=base] {$h$};
\fill (90,-15) circle (5mm) node[below=4mm,anchor=base] {$i$};
\draw[fill=white] (100,-15) circle (5mm) node[below=4mm,anchor=base] {$1$};

\draw[dotted] (0,15) -- (0,-15) node[below=4mm] {};
\draw[dotted] (10,0) -- (10,-15) node[below=4mm] {};
\draw[dotted] (20,15) -- (20,-15) node[below=4mm] {};
\draw[dotted] (30,30) -- (30,-15) node[below=4mm] {};
\draw[dotted] (45,15) -- (45,-15) node[below=4mm] {};
\draw[dotted] (50,0) -- (50,-15) node[below=4mm] {};
\draw[dotted] (55,0) -- (55,-15) node[below=4mm] {};
\draw[dotted] (60,15) -- (60,-15) node[below=4mm] {};
\draw[dotted] (70,30) -- (70,-15) node[below=4mm] {};
\draw[dotted] (90,0) -- (90,-15) node[below=4mm] {};
\draw[dotted] (100,0) -- (100,-15) node[below=4mm] {};
    
\draw[decorate, decoration={brace, amplitude=5pt}, yshift=10ex] (1,-35) -- (9,-35) node[midway, above=6pt] {$w_1$};
\draw[decorate, decoration={brace, amplitude=5pt}, yshift=10ex] (11,-35) -- (19,-35) node[midway, above=6pt] {$w_2$};
\draw[decorate, decoration={brace, amplitude=5pt}, yshift=10ex] (21,-35) -- (49,-35) node[midway, above=6pt] {$w_3$};
\draw[decorate, decoration={brace, amplitude=5pt}, yshift=10ex] (51,-35) -- (54,-35) node[midway, above=6pt] {$w_5$};
\draw[decorate, decoration={brace, amplitude=5pt}, yshift=10ex] (56,-35) -- (89,-35) node[midway, above=6pt] {$w_6$};
\draw[decorate, decoration={brace, amplitude=5pt}, yshift=10ex] (91,-35) -- (99,-35) node[midway, above=6pt] {$w_7$};

\draw[fill=white] (0, -35) circle (5mm);
\fill (10,-35) circle (5mm) node[below=4mm,anchor=base] {$a$};
\fill (20,-35) circle (5mm) node[below=4mm,anchor=base] {$b$};
\fill (50,-35) circle (5mm) node[below=4mm,anchor=base] {$e$};
\fill (55,-35) circle (5mm) node[below=4mm,anchor=base] {$f$};
\fill (90,-35) circle (5mm) node[below=4mm,anchor=base] {$i$};
\draw[fill=white] (100,-35) circle (5mm) node[below=4mm,anchor=base] {$1$};

\end{tikzpicture}
\caption{Aggregation of heterogeneous vocabularies by the median rule}\label{fig-8}
\end{figure}
		
If we juxtapose the ordered sequence of elements from the three corresponding multisets of endpoints, we obtain the matrix-like arrangement
$$M = \begin{bmatrix}
c & h & 1 & 1 & 1 & 1 & 1 \\
0 & b & b & d & d & g & g \\
a & a & e & e & f & i & 1 \\
\end{bmatrix}$$
Applying the median rule defined in Section~\ref{sec33}, we obtain the collective multiset
$$s_c = \begin{bmatrix}
a & b & e & e & f & i & 1 \\
\end{bmatrix}$$
which (uniquely) corresponds to a partial vocabulary having six active words: $w_1$ on $(0,a)$, $w_2$ on $[a,b)$, $w_3$ on $[b,e)$, $w_5$ on $[e,f)$, $w_6$ on $ [f,i)$, and $w_7$ on $[i,1)$. This is depicted in the last line of Figure~\ref{fig-8}. 		
\end{example}
	
We make two observations. On one hand, the collective vocabulary lacks the words $w_4$ and $w_8$, which are present only in Agent $2$'s vocabulary. On the other hand, the collective vocabulary includes the words $w_5$ and $w_7$, which are present only in Agent $3$'s vocabulary. 

This leads to inquire about active words in the collective vocabulary. Because an active word has nonempty extent, we consider two properties. The first one requires that the aggregation rule keeps active any word actively used by a strict majority of agents, regardless of whether they apply it to different extents. Let $N^k = \{ i \in N: s^i_k < s^i_{k+1}\}$ be the set of agents with individual vocabularies where word $k$ is active.

\smallskip\noindent
\textbf{Majoritarian words:} If $|N^k| \geq \frac{n+1}{2}$, then $f^{k}(M^k) < f^{k+1}(M^{k+1})$.
\smallskip

\begin{prop}
Any $p$-rule may fail to preserve majoritarian words.     
\end{prop}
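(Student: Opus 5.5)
We read the statement as saying that, for every $p$-rule $\textbf{p}=(p_1,\ldots,p_m)$ and every number of agents $n$ (odd, or at least $n\ge 3$ so that a strict majority can differ from unanimity), there is a profile $M$ in which some word $k$ is active for at least $\frac{n+1}{2}$ agents, yet $f^{k}(M^k)=f^{k+1}(M^{k+1})$. The plan is constructive. Fix a word index $k$ with $1\le k<m$, so that word $k+1$ lies between the endpoints $s^k$ and $s^{k+1}$ and the inequality in the definition of majoritarian words compares two collective endpoints. Then build $M$ so that the $p_k$-th order statistic of column $k$ coincides with the $p_{k+1}$-th order statistic of column $k+1$. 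The boundary words are handled the same way, using $\inf X$ or $\sup X$ as the fixed endpoint.

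The idea is that a majority can use word $k+1$ actively while the endpoints they use are spread out, so no single extent is shared. Choose a proper point $t\in X$, let $q=\lceil\frac{n+1}{2}\rceil$, and split the agents into three groups.
\begin{itemize}
\item Group $A$ has $p_k-1$ agents with $s^k_i<t$ and $s^{k+1}_i<t$, and the word is active for them.
\item Group $B$ has agents with $s^k_i=s^{k+1}_i=t$, so the word is inactive for them.
\item Group $C$ has the remaining agents, with $s^k_i>t$ and $s^{k+1}_i>t$, and the word is active for them.
\end{itemize}
In column $k$ exactly $p_k-1$ entries lie below $t$. In column $k+1$ exactly $p_k-1\le p_{k+1}-1$ entries lie below $t$. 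To make $t$ the $p_{k+1}$-th order statistic of column $k+1$, we also need fewer than $p_{k+1}$ entries to be $<t$ and at least $p_{k+1}$ entries to be $\le t$. This holds when group $B$ has size at least $\max(1,p_{k+1}-p_k+1)$. Then both collective endpoints equal $t$, and word $k+1$ is inactive collectively.

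For word $k+1$ to be majoritarian we need $|A|+|C|\ge q$, that is, $|B|\le n-q$. Whether this is compatible with the lower bound on $|B|$ depends on $\textbf{p}$, and this is the main obstacle: the required size of $B$ grows with the gap $p_{k+1}-p_k$. To handle it, we choose $k$ with the smallest gap, or place $t$ so that the ties are absorbed more cleverly.

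The fallback is to avoid ties altogether. We only need some point $t$ to be simultaneously the $p_k$-th smallest of column $k$ and the $p_{k+1}$-th smallest of column $k+1$, with word $k+1$ active for most agents. Let the agents whose $k$-th endpoint is below $t$ be exactly those whose $(k+1)$-th endpoint is below $t$, up to a shift. Agents with $s^k_i<t\le s^{k+1}_i$ have the word active and straddling $t$. Let $j=p_{k+1}-p_k$ agents straddle $t$, one agent have both endpoints equal to $t$, and all others sit strictly on one side. Then every agent except one has word $k+1$ active. Concretely, set $p_k-1$ agents below $t$, one agent at $(t,t)$, $j$ straddling agents, and the rest above $t$. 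Column $k$ then has $p_k-1$ entries $<t$ and one entry equal to $t$, so its $p_k$-th order statistic is $t$. Column $k+1$ has $p_k-1$ entries $<t$, one entry $=t$, and $j$ entries $>t$ from straddlers, so its $p_k$-th order statistic is $t$, not its $p_{k+1}$-th. To fix this, the straddlers should instead have $s^k_i<t$ and $s^{k+1}_i=t$, each with a distinct $s^k_i$ and the word active. Column $k+1$ then has $p_k-1$ entries $<t$ followed by $j+1$ entries equal to $t$, so its $p_{k+1}$-th order statistic is $t$. Column $k$ has $p_k-1+j$ entries $<t$, which breaks its order statistic. The remedy is to give the straddlers $s^k_i=t$ and $s^{k+1}_i>t$ instead. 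Then column $k$ has $p_k-1$ entries below $t$ followed by ties at $t$, giving $f^k=t$. Column $k+1$ has $p_k-1$ entries below $t$, one at $t$, and the rest above, so $f^{k+1}=t$ requires $p_{k+1}=p_k$.

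Because the cases $p_{k+1}=p_k$ and $p_{k+1}>p_k$ behave differently, we first dispose of the case $p_{k+1}=p_k$ with this construction; only one agent has the word inactive, so it is majoritarian once $n\ge3$. For $p_{k+1}>p_k$, we then use the symmetric construction, with straddlers having $s^k_i<t=s^{k+1}_i$ and the $p_k$-th lowest $k$-th endpoint tuned to equal $t$.
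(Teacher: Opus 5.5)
There is a genuine gap, and part of it comes from reading the statement too strongly. The step that fails is your last one: for $p_{k+1}>p_k$ you assert a ``symmetric construction'' with straddlers and essentially one inactive agent, but no such profile exists. If $M^k_{(p_k)}=M^{k+1}_{(p_{k+1})}=t$, then at most $p_k-1$ agents have $s^k_i<t$, while at least $p_{k+1}$ agents have $s^{k+1}_i\le t$. Every agent in the second set but not the first satisfies $s^k_i\ge t\ge s^{k+1}_i$, hence $s^k_i=s^{k+1}_i=t$. So at least $p_{k+1}-p_k+1$ agents must have word $k+1$ inactive, whatever you do with $t$ or with straddlers.

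Your first construction (groups $A$, $B$, $C$) already attains this bound, so the obstacle you flagged is intrinsic. Word $k+1$ can be made to fail while majoritarian only if $p_{k+1}-p_k+1\le n-\lceil\frac{n+1}{2}\rceil$. When every interior gap is too large you are forced onto a boundary word, and ``handled the same way'' hides real constraints. The first word can fail only if $p_1\le\lfloor\frac{n-1}{2}\rfloor$, since at least $p_1$ agents must sit at $\inf X$. The last word can fail only if $p_m\ge\lceil\frac{n+1}{2}\rceil+1$.

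These bounds show your universal reading is false in general. For $n=4$, $m=2$, $\textbf{p}=(2,3)$, none of the three conditions holds, so this $p$-rule preserves majoritarian words on every profile. For $m=1$ every $p$-rule does so trivially, since each $s^1_i$ must be proper.

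The statement is meant existentially, and the paper proves it with a single profile that defeats every $p$-rule at once. Take $n=3$, $m=2$, with vocabularies $\{w_1,w_2\}$, $\{w_1,w_3\}$, $\{w_2,w_3\}$, all split at the same proper point $a$. Each word is active for two of the three agents. But the columns of $M$ are $\{a,a,\inf X\}$ and $\{\sup X,a,a\}$, so any rule that picks each collective endpoint from its own column produces at most two active words.

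If you do want a universal version, it holds for $n$ odd and $m\ge 2$, and your material suffices once you add a case split:
\begin{itemize}
\item If $p_1\le\frac{n-1}{2}$, place $p_1$ agents at $s^1_i=\inf X$ (each with a proper $s^2_i$) and the rest strictly above $\inf X$. This kills the first word.
\item If $p_m\ge\frac{n+3}{2}$, do the mirror image at $\sup X$.
\item Otherwise all $p_k=\frac{n+1}{2}$, and your construction with $|B|=1$ kills any interior word, since $n-1\ge\frac{n+1}{2}$ when $n\ge 3$.
\end{itemize}
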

\begin{proof}
Assume three agents, with respective vocabularies 
\begin{align*} 
V_1 & = \{ w_1 \mbox{ is } (0,a); w_2 \mbox{ is } (a,1) \} \\
V_2 & = \{ w_1 \mbox{ is } (0,a); w_3 \mbox{ is } (a,1) \} \\
V_3 & = \{ w_2 \mbox{ is } (0,a); w_3 \mbox{ is } (a,1) \}
\end{align*}
as shown in Figure~\ref{fig-9}. 
\begin{figure}[!ht]
\centering
\begin{tikzpicture}[scale=0.1]
\draw[-, thick] (-0.5, 0) -- (90.5, 0);
\draw[-, thick] (-0.5, 15) -- (90.5, 15);
\draw[-, thick] (-0.5, 30) -- (90.5, 30);

\node[text width=2cm] at (-10,30) {Agent $1$};
\node[text width=2cm] at (-10,15) {Agent $2$};
\node[text width=2cm] at (-10,00) {Agent $3$};

\draw[decorate, decoration={brace, amplitude=5pt}, yshift=10ex] (1,30) -- (44,30) node[midway, above=0.4ex] {$w_1$};
\draw[decorate, decoration={brace, amplitude=5pt}, yshift=10ex] (46,30) -- (89,30) node[midway, above=0.4ex] {$w_2$};

\draw[decorate, decoration={brace, amplitude=5pt}, yshift=10ex] (1,15) -- (44,15) node[midway, above=6pt] {$w_1$};
\draw[decorate, decoration={brace, amplitude=5pt}, yshift=10ex] (46,15) -- (89,15) node[midway, above=0.4ex] {$w_3$};

\draw[decorate, decoration={brace, amplitude=5pt}, yshift=10ex] (1,0) -- (44,0) node[midway, above=6pt] {$w_2$};
\draw[decorate, decoration={brace, amplitude=5pt}, yshift=10ex] (46,0) -- (89,0) node[midway, above=6pt] {$w_3$};

\draw[fill=white] (0, 0) circle (5mm);
\draw[fill=white] (0, 15) circle (5mm);
\draw[fill=white] (0, 30) circle (5mm);
\fill (45, 0) circle (5mm) node[below=4mm,anchor=base] {$a$};
\fill (45,15) circle (5mm) node[below=4mm,anchor=base] {$a$};
\fill (45,30) circle (5mm) node[below=4mm,anchor=base] {$a$};
\fill (90,0) circle (5mm);
\draw[fill=white] (90,0) circle (5mm);
\draw[fill=white] (90,15) circle (5mm);
\draw[fill=white] (90,30) circle (5mm);
 
\end{tikzpicture}
\caption{Three heterogeneous individual vocabularies}\label{fig-9}
\end{figure}
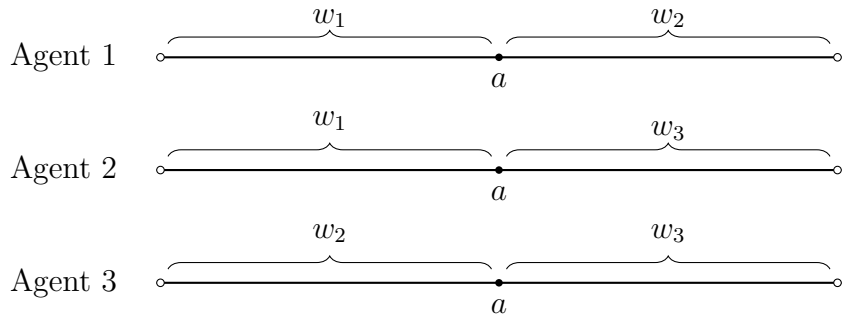
Each individual vocabulary has two active words, and each word is used by at least two agents. On the other hand, the set of all individual endpoints is a singleton. Hence, because any $p$-rule selects its collective endpoints from the set of individual endpoints, there cannot be three active words with nonempty extent.
\end{proof}

Contrary to $p$-rules, the mean rule preserves majoritarian words. Indeed, the mean rule satisfies the stronger property of preserving any word for which at least one agent has nonempty extent. Indeed, assume that agent $i$ actively uses word $w_k$; that is, $s^k_i < s^{k+1}_i$. (A similar argument holds for $0 < s^1_i$ or $s^m_i<1$.) Because $s^k_j \leq s^{k+1}_j$ for any other agent $j$, we have $f^k_{ave}(\textbf{s}) < f^{k+1}_{ave}(\textbf{s})$ and the collective extent for word $k$ is not empty. In the example above, for instance, the mean rule yields the two endpoints $\frac{2a}{3}$ and $\frac{1+2a}{3}$.

A second stronger property requires that the aggregation rule preserves the extent of an active word when it is shared by a strict majority of agents. Let $N^k (a,b) = \{ i \in N: s^i_k \le a < b \le s^i_{k+1}\}$ be the set of agents whose individual vocabularies agree that the extent $(a,b)$ is included in word $k$.

\smallskip\noindent
\textbf{Majoritarian extents:} If $|N^k (a,b)| \geq \frac{n+1}{2}$, then $f^{k}(M^k) \leq a < b \leq f^{k+1}(M^{k+1})$.
\smallskip

The only $p$-rule that respects majoritarian extents is the median rule, provided that the number of agents $n$ is odd.
	
\begin{prop}\label{thm6}
A $p$-rule preserves majoritarian extents if and only if $n$ is odd and $p_k = \frac{n+1}{2}$ for all $k$.     
\end{prop}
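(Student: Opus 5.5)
Both directions reduce to counting how many of the relevant individual endpoints lie on each side of $a$ and $b$. The notation needs care: $N^k(a,b)$ is the set of agents for whom the extent of word $k$ contains $(a,b)$. With the paper's indexing, word $k$ lies between the endpoints $s^{k-1}_i$ and $s^k_i$ (or between $s^k_i$ and $s^{k+1}_i$ under the shifted convention of the definition). I will keep the convention of the definition: word $k$ is delimited by columns $M^k$ and $M^{k+1}$, with the improper columns $\inf X$ and $\sup X$ added at the two ends so that the first and last words are covered. For $i \in N^k(a,b)$ we have $s^k_i \le a$ and $s^{k+1}_i \ge b$.

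\emph{Sufficiency.} Let $n$ be odd and $p_k=\frac{n+1}{2}$ for all $k$, so $f$ is the median rule of Section~\ref{sec33}. Suppose $|N^k(a,b)| \ge \frac{n+1}{2}$.
\begin{itemize}
\item At least $\frac{n+1}{2}$ entries of $M^k$ are $\le a$, so the $\frac{n+1}{2}$-th order statistic satisfies $M^k_{(\frac{n+1}{2})} \le a$.
\item At least $\frac{n+1}{2}$ entries of $M^{k+1}$ are $\ge b$, so at most $\frac{n-1}{2}$ entries are $<b$, and hence $M^{k+1}_{(\frac{n+1}{2})} \ge b$.
\end{itemize}
This gives $f^k \le a < b \le f^{k+1}$. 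For the boundary words, the improper endpoints are fixed by construction, so there is nothing to check.

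\emph{Necessity.} Take a $p$-rule that preserves majoritarian extents and fix some $k$ with $1\le k\le m$. The first step is to show $p_k \le \frac{n+1}{2}$. Use the word ending at endpoint $k$, i.e.\ the word between columns $k-1$ and $k$ (or the first word if $k=1$). Build a profile with $q=\lceil \frac{n+1}{2}\rceil$ agents who place the $k$-th endpoint at $b$ and all earlier endpoints below $a<b$. The remaining $n-q$ agents place the $k$-th endpoint and all later endpoints at points in $(a,b)$, below $b$, with their earlier endpoints also below $a$. Consistency of the individual sequences is easy to arrange, possibly using multiplicities, which heterogeneity allows. Majoritarian extents then forces $f^k \ge b$. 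Column $M^k$ has exactly $n-q$ entries below $b$, so $p_k \ge n-q+1$ would suffice to obtain $f^k\ge b$, which is the wrong direction for a contradiction. The construction must therefore be reversed: the majority should be placed so that a large $p_k$ fails. Concretely, the word \emph{starting} at endpoint $k$ gives the constraint $f^k\le a$. Let $q$ agents have $s^k_i=a$ and later endpoints $\ge b$, and let the other $n-q$ agents have $s^k_i$ strictly between $a$ and $b$ (with their $(k+1)$-th endpoint also in $(a,b)$ or beyond). Then $M^k_{(p_k)}\le a$ requires $p_k\le q$. Symmetrically, the word ending at endpoint $k$ requires $p_k \ge n-q+1$. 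Taking $q=\frac{n+1}{2}$ when $n$ is odd gives $p_k=\frac{n+1}{2}$.

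When $n$ is even, the smallest strict majority is $q=\frac n2+1$, which only yields $\frac n2 \le p_k\le \frac n2+1$ and does not by itself give a contradiction. This is the main obstacle. For even $n$ I would instead use two majorities of size $\frac n2+1$ with opposite placements in the same column: the first word's majority forces $p_k \le \frac{n}{2}$, while a second profile forces $p_k\ge\frac n2+1$. Concretely, I would look for a profile where majority $A$ (size $\frac n2+1$) has $s^k\le a$ and majority $B$ (size $\frac n2+1$) has $s^k\ge b$ for a different pair. Such overlapping majorities are impossible within a single profile, so the two bounds must come from separate profiles. I would carefully recount to check that the tight bound is $p_k\le q$ from one side and $p_k\ge n-q+1$ from the other, and determine whether, for even $n$, the definition's threshold $\frac{n+1}{2}$ (non-integer) allows a coalition of size exactly $\frac n2$ to be pivotal in some degenerate configuration, such as a boundary word involving improper endpoints. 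If it does not, the even case needs a separate counterexample exploiting the boundary endpoints.
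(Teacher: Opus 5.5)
\textbf{The even case cannot be closed because the statement is false there.} Your sufficiency argument is correct. Your necessity argument for odd $n$ is also correct. Your first construction is not a false start: it is exactly the lower bound. Set $q=\lceil\frac{n+1}{2}\rceil$. A coalition of exactly $q$ agents supporting an extent of the word that ends at endpoint $k$, with the other $n-q$ agents placing $s^k_i<b$, forces $p_k\ge n-q+1$. A coalition supporting an extent of the word that starts at endpoint $k$ forces $p_k\le q$.

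The ``only if $n$ is odd'' half of the statement fails, and your own sufficiency computation shows why. It goes through verbatim for every $p$-rule with $n-q+1\le p_k\le q$ for all $k$. If $|N^k(a,b)|\ge q$, then at least $q\ge p_k$ entries of $M^k$ are $\le a$, so $M^k_{(p_k)}\le a$. Also at most $n-q<p_{k+1}$ entries of $M^{k+1}$ are $<b$, so $M^{k+1}_{(p_{k+1})}\ge b$. For even $n$ this window is $\frac n2\le p_k\le\frac n2+1$, which is nonempty. Two concrete counterexamples:
\begin{itemize}
\item For $n=2$, a strict majority is unanimity. Every $p$-rule picks $f^k$ among the $s^k_i$, so it preserves every unanimously shared extent.
\item For $n=4$, the rule with $p_k=2$ for all $k$ preserves majoritarian extents.
\end{itemize}

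None of your proposed repairs can work. Since $|N^k(a,b)|$ is an integer, the threshold $\frac{n+1}{2}$ is triggered only by coalitions of size at least $\frac n2+1$; a coalition of size $\frac n2$ is never pivotal. So no profile, with or without improper endpoints, yields $p_k\le\frac n2$ or $p_k\ge\frac n2+1$. The boundary words only contribute $p_1\ge n-q+1$ and $p_m\le q$. The contradiction you were looking for does appear under the weak threshold $\frac n2$: there $q=\frac n2$ and your bounds read $\frac n2+1\le p_k\le\frac n2$. That is a different axiom.

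\textbf{Where the paper's proof goes wrong.} The paper obtains the parity condition only by reversing both inequalities. It asserts that $f^k\le a$ requires $p_k\ge\lceil\frac{n+1}{2}\rceil$, and that $f^{k+1}\ge b$ requires $p_{k+1}\le\lfloor\frac{n+1}{2}\rfloor$. In fact it is a low order statistic that falls at or below $a$, and a high one that lands at or above $b$. With the correct directions, $p_k\le\lceil\frac{n+1}{2}\rceil$ and $p_{k+1}\ge\lfloor\frac{n+1}{2}\rfloor$, and the chain gives no contradiction for even $n$.

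What is actually true, and what your argument proves, is this: a $p$-rule preserves majoritarian extents if and only if $\lfloor\frac{n+1}{2}\rfloor\le p_k\le\lceil\frac{n+1}{2}\rceil$ for every $k$. This collapses to the median rule precisely when $n$ is odd. State that version instead of trying to exclude even $n$.
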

\begin{proof}
We prove only necessity because sufficiency is trivial. An interval $(a,b)$ is approved by a strict majority when there are (at least) $\lceil \frac{n+1}{2} \rceil$ agents whose individual vocabularies have endpoints $s^k_i < a < b < s^{k+1}_i$ for the same $k$. We need to ensure that the collective endpoints $s^k_c$ and $s^{k+1}_c$ satisfy $s^k_c < a < b < s^{k+1}_i$.

Because any $p$-rule selects the $k$-th collective endpoint $s^k_c$ from the union of the $k$-th individual endpoints $\{ s^k_i \}$, we need $p_k \ge \lceil \frac{n+1}{2} \rceil$ to ensure that at least $\lceil \frac{n+1}{2} \rceil$ agents approve $a$. Similarly, we must have $p_{k+1} \le n - \lceil \frac{n+1}{2} \rceil + 1 = \lfloor \frac{n+1}{2} \rfloor$ so that at least $\lceil \frac{n+1}{2} \rceil$ approve $b$. Chaining the inequalities, we find
$$\left\lceil \frac{n+1}{2} \right\rceil \le p_k \le p_{k+1} \le \left\lfloor \frac{n+1}{2} \right\rfloor$$
and the conclusion follows because $\lceil \frac{n+1}{2} \rceil \le \lfloor \frac{n+1}{2} \rfloor$ holds only for $n$ odd.
\end{proof}

Proposition~\ref{thm6} provides a dim view of the possibility of respecting strict majorities. However, if the property is relaxed to a weak majority of at least 50\% of the agents, a straightforward modification of the proof shows that a $p$-rule preserves weakly majoritarian extents if and only if $\frac{n}{2} \le p_k \le p_{k+1} \le \frac{n+1}{2}$ for all $k$. In particular, when the parity condition holds, the median rule for $m$ endpoints respects weakly majoritarian extents. 

\section{Incomplete vocabularies}\label{sec5}

Until now, we have assumed that each agent reports the partition that identifies their individual vocabulary. \citet{far11} argue that this may require a ``significant deliberation effort'', and suggest the possibility to query agents only about specific examples. While this does not generally suffice to derive a collective vocabulary for the whole range of $X$, it may provide enough information for some purposes.

Suppose that all agents are shown the same finite set of \emph{exemplars} $E=\{ e_1, \ldots, e_r \}$ from $X$. Given an $(m+1)$-item vocabulary, an agent attaches each point in $E$ to one of the $m+1$ words. (It is reasonable to expect $r > m+1$ so that agents, facing more exemplars than words, have a chance to use them all; however, this is not necessary.) 

For example, assume a vocabulary with six words $w_1 \prec w_2 \prec \ldots \prec  w_6$. The first line in Figure~\ref{fig-10} depicts five exemplars $e_1 < e_2 < \ldots < e_5$ and their attached words.
\begin{figure}[!ht]
\vspace{2mm}
\centering
\begin{tikzpicture}
	\draw[>={Latex[width=4mm]}] (0, 2) -- (12, 2);
	\draw[fill=white] (0,2) circle (0.75mm);
	\fill[black] (2,2) circle (0.75mm) node[above=2mm] {$w_2$};
	\draw (2,2) node[below=2mm] {$e_1$};
	\fill[black] (4,2) circle (0.75mm) node[above=2mm] {$w_3$};
	\draw (4,2) node[below=2mm] {$e_2$};
	\fill[black] (6,2) circle (0.75mm) node[above=2mm] {$w_3$};
	\draw (6,2) node[below=2mm] {$e_3$};
	\fill[black] (8,2) circle (0.75mm) node[above=2mm] {$w_5$};
	\draw (8,2) node[below=2mm] {$e_4$};
	\fill[black] (10,2) circle (0.75mm) node[above=2mm] {$w_6$};
	\draw (10,2) node[below=2mm] {$e_5$};
	\draw[fill=white] (12,2) circle (0.75mm);
		
	\draw[>={Latex[width=4mm]}] (0, 0) -- (12, 0);
	\draw[fill=white] (0,0) circle (0.75mm);
	\fill[black] (2,0) circle (0.75mm) node[above=2mm] {$w_2$};
	\draw (2,0) node[below=2mm] {$e_1$};
	\fill[black] (4,0) circle (0.75mm);
	\draw[decorate, decoration={brace}, yshift=1ex]  (4.01,0) -- node[above=0.4ex] {$w_3$}  (5.99,0);
	\draw (4,0) node[below=2mm] {$e_2$};
	\fill[black] (6,0) circle (0.75mm);
	\draw (6,0) node[below=2mm] {$e_3$};
	\fill[black] (8,0) circle (0.75mm) node[above=2mm] {$w_5$};
	\draw (8,0) node[below=2mm] {$e_4$};
	\fill[black] (10,0) circle (0.75mm);
	\draw[decorate, decoration={brace}, yshift=1ex]  (10.01,0) -- node[above=0.4ex] {$w_6$}  (11.99,0);
	\draw (10,0) node[below=2mm] {$e_5$};
	\draw[fill=white] (12,0) circle (0.75mm);
	\draw[-, line width=0.6mm] (4,0) -- (6,0);
	\draw[-, line width=0.6mm] (10.05,0) -- (11.95,0);
\end{tikzpicture}
\caption{Five exemplars and their induced vocabulary}\label{fig-10}
\end{figure}
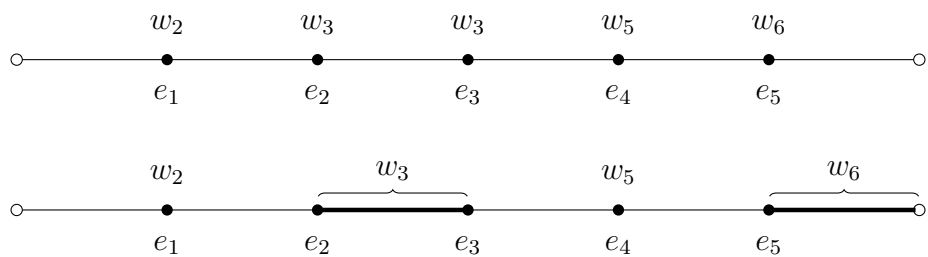

We leverage the assumption that $X$ is a linear order to draw some conclusions. If any two exemplars $e < e^\prime$ are attached to the same word, so must be the interval $[e,e^\prime]$. In the example, the word $w_3$ extends from the two exemplars $e_2 < e_3$ to the interval $[e_2, e_3]$. In general, if we denote by $E^k$ the finite set of exemplars associated with a word $w_k$, the interval $[\min E^k, \max E^k]$ is assigned to $w_k$. 

By a similar argument, if the first (or last) exemplar is attached to the first (or last) word, one may extend the word to the respective endpoint. In the example, the word $w_6$ extends from $e_5$ to the interval $[e_5, \sup X)$. These natural extensions generate the incomplete vocabulary depicted in the second line of Figure~\ref{fig-10}, which we call the \emph{induced vocabulary}.

This section makes three arguments. First, we show how to extend the duality between words and endpoints to incomplete vocabularies. Second, we exploit this duality to define aggregation rules over incomplete vocabularies; although one cannot recover a complete collective vocabulary from the aggregation of several incomplete vocabularies, it is possible to reach partial conclusions. Third, we exemplify the argument using a $p$-rule to aggregate individual and incomplete vocabularies. We focus on the case of homogeneous incomplete vocabularies, where all agents are given the same finite set of exemplars to label and everyone labels all the exemplars. 

We begin with duality. When a vocabulary is complete, endpoints separate words. When a vocabulary is incomplete, singletons or intervals attached to a word are separated by gaps. Return to the second line in Figure~\ref{fig-10}: the induced vocabulary is a mixture of four partial words (attached to two singletons and two intervals, in bold), with gaps in between. 

The key observation is that the alternation between nonempty extents and gaps implies that an induced vocabulary with $m+1$ words must have $m$ gaps (possibly, with multiplicity greater than one). The duality is restored by treating gaps as if they were endpoints. Every incomplete vocabulary with $m+1$ words is associated with a (unique) ordered sequence of exactly $m$ gaps $g^1 \preceq g^2 \preceq \ldots \preceq g^m$, where $\prec$ is the natural order between consecutive intervals. 

Returning to the second line in Figure~\ref{fig-10}, the active (partial) words are $w_2 \prec w_3 \prec w_5 \prec w_6$. Arrange the five gaps delimiting them in increasing order: the gap $g_k$ separates the word $w_k$ from $w_{k+1}$. The five gaps are grouped $g_1 | g_2 | g_3g_4| g_5$. The two gaps $g_3,g_4$ are joined because the induced vocabulary carries insufficient information to separate them. Figure~\ref{fig-11} exhibits the duality, placing words above the line and gaps below it.
\begin{figure}[!ht]
\vspace{2mm}
\centering
\begin{tikzpicture}
	\draw[>={Latex[width=4mm]}] (0, 0) -- (12, 0);
	\draw[fill=white] (0,0) circle (0.75mm);
	\fill[black] (2,0) circle (0.75mm) node[above=2mm] {$w_2$};
	\draw[decorate, decoration={brace,mirror}, yshift=-1ex]  (0.01,0) -- node[below=0.4ex] {$g_1$}  (1.99,0);
	\fill[black] (4,0) circle (0.75mm);
	\draw[decorate, decoration={brace}, yshift=1ex]  (4.01,0) -- node[above=0.4ex] {$w_3$}  (5.99,0);
	\draw[decorate, decoration={brace,mirror}, yshift=-1ex]  (2.01,0) -- node[below=0.4ex] {$g_2$}  (3.99,0);
	\fill[black] (6,0) circle (0.75mm);
	\draw[decorate, decoration={brace,mirror}, yshift=-1ex]  (6.01,0) -- node[below=0.4ex] {$g_3,g_4$}  (7.99,0);
	\fill[black] (8,0) circle (0.75mm) node[above=2mm] {$w_5$};
	\draw[decorate, decoration={brace,mirror}, yshift=-1ex]  (8.01,0) -- node[below=0.4ex] {$g_5$}  (9.99,0);
	\fill[black] (10,0) circle (0.75mm);
	\draw[decorate, decoration={brace}, yshift=1ex]  (10.01,0) -- node[above=0.4ex] {$w_6$}  (11.99,0);
	\draw[fill=white] (12,0) circle (0.75mm) (12,0);
	\draw[-, line width=0.6mm] (4,0) -- (6,0);
	\draw[-, line width=0.6mm] (10.05,0) -- (11.95,0);
\end{tikzpicture}
\caption{Duality between extents and gaps}\label{fig-11}
\end{figure}
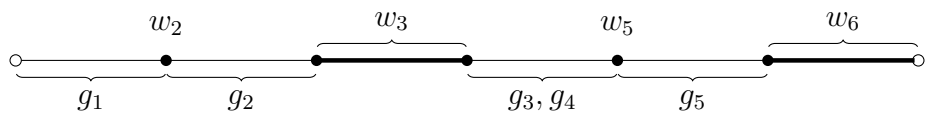

Consider aggregation rules. Under the word-gap duality, each individual (partial) vocabulary is associated with an ordered sequence $g^1 \preceq \ldots \preceq g^m$ of gaps. Given homogeneous vocabularies and $n$ agents, we juxtapose the individual ordered sequences of gaps into a matrix-like arrangement 
$$M = \begin{bmatrix}
g_1^1 & g_1^2 & \cdots & g_1^m\\
g_2^1 & g_2^2 & \cdots & g_2^m\\
\vdots & \vdots & \vdots & \vdots \\
g_n^1 & g_n^2 & \cdots & g_n^m\\
\end{bmatrix}$$
where row $i$ contains the ordered sequence of gaps for agent~$i$, and column~$k$ lists the $k$-th gaps for all agents. The rest of the notation is as before.

An aggregation rule $f: {\cal M} \to X^m$ maps each profile of individual gaps in $\cal M$ into a vector $g_c = (g_c^1, \ldots, g_c^k, \ldots, g_c^m)$ of collective gaps in $X^m$. Mirroring the construction above, we can state properties that characterize the aggregation of individual incomplete vocabularies. The setup is similar, but the aggregation yields collective gaps that define a collective incomplete vocabulary. 

We illustrate the aggregation procedure and the resulting collective incomplete vocabulary with an example.

\begin{example}\label{ex-3}
Suppose a vocabulary with four words. Three agents are given the same three exemplars $a < b < c$. Agent~1 uses the words $w_1$ for $a$, $w_3$ for $b$, $w_4$ for $c$. Agent~2 uses $w_2$ for $a$ and $b$, $w_3$ for $c$. Agent~3 uses $w_1$ for $a$, and $w_3$ for $b$ and $c$.

The induced vocabularies are depicted in the top three lines of Figure~\ref{fig-12}. We position the gaps for each individual (induced) vocabulary below each line, with $g^k_i$ being the $k$-th gap for agent~$i$. 
\begin{figure}[!ht]
\centering
\begin{tikzpicture}[scale=0.1]
\draw[-] (-0.5, 0) -- (100.5, 0);
\draw[-] (-0.5, 15) -- (100.5, 15);
\draw[-] (-0.5, 30) -- (100.5, 30);
\draw[-] (-0.5, -15) -- (100.5, -15);		
\draw[-] (-0.5, -30) -- (100.5, -30);

\node[text width=2cm] at (-10,30) {Agent $1$};
\node[text width=2cm] at (-10,15) {Agent $2$};
\node[text width=2cm] at (-10,00) {Agent $3$};
\node[text width=2.8cm] at (-10,-15) {Collective};
\node[text width=4cm] at (-10,-30) {\textbf{Vocabulary}};

\node[below=0.1ex] at (37.5,30) {$g_1^1,g_1^2$};
\node[below=0.1ex] at (62.5,30) {$g_1^3$};
\node[below=0.1ex] at (12.5,15) {$g_2^1$};
\node[below=0.1ex] at (62.5,15) {$g_2^2$};
\node[below=0.1ex] at (87.5,15) {$g_2^3$};
\node[below=0.1ex] at (37.5,0) {$g_3^1,g_3^2$};
\node[below=0.1ex] at (87.5,0) {$g_3^3$};
\node[below=0.1ex] at (37.5,-15) {$g_c^1,g_c^2$};
\node[below=0.1ex] at (87.5,-15) {$g_c^3$};
\draw[fill=white] (0, 0) circle (5mm);
\draw[fill=white] (0, 15) circle (5mm);
\draw[fill=white] (0, 30) circle (5mm);
\fill (25,30) circle (5mm);
\node[above] at (12.5,30) {$w_1$};
\fill (25,15) circle (5mm); 
\fill (25,0) circle (5mm);
\node[above] at (12.5,0) {$w_1$}; 
\fill (50,30) circle (5mm) node[above] {$w_3$};
\fill (50,15) circle (5mm);
\node[above] at (37.5,15) {$w_2$};
\fill (50,0) circle (5mm);
\node[above] at (62.5,0) {$w_3$};
\fill (75,30) circle (5mm); 
\node[above] at (87.5,30) {$w_4$};
\fill (75,15) circle (5mm) node[above] {$w_3$};
\fill (75,0) circle (5mm);
\draw[fill=white] (100,30) circle (5mm);
\draw[fill=white] (100,15) circle (5mm);
\draw[fill=white] (100,0) circle (5mm);

\draw[-, line width=0.6mm] (0.12,30) -- (24.88,30);
\draw[-, line width=0.6mm] (75.12,30) -- (99.88,30);
\draw[-, line width=0.6mm] (25.12,15) -- (49.88,15);
\draw[-, line width=0.6mm] (0.12,0) -- (24.88,0);
\draw[-, line width=0.6mm] (50.12,0) -- (74.88,0);

\draw[dotted] (25,30) -- (25,-30) node[below=4mm] {};
\draw[dotted] (50,30) -- (50,-30) node[below=4mm] {};
\draw[dotted] (75,30) -- (75,-30) node[below=4mm] {};

\draw[fill=white] (0, -30) circle (5mm);
\node[above] at (12.5,-30) {$w_1$};
\node[above] at (62.5,-30) {$w_3$};
\draw[-, line width=0.6mm] (0.12,-30) -- (24.88,-30);
\draw[-, line width=0.6mm] (50.12,-30) -- (74.88,-30);
\fill (25,-30) circle (5mm) node[below=4mm,anchor=base] {$a$};
\fill (50,-30) circle (5mm) node[below=4mm,anchor=base] {$b$};
\fill (75,-30) circle (5mm) node[below=4mm,anchor=base] {$c$};
\draw[fill=white] (100,-30) circle (5mm);

\end{tikzpicture}
\caption{Aggregation of incomplete vocabularies by the median rule}\label{fig-12}
\end{figure}
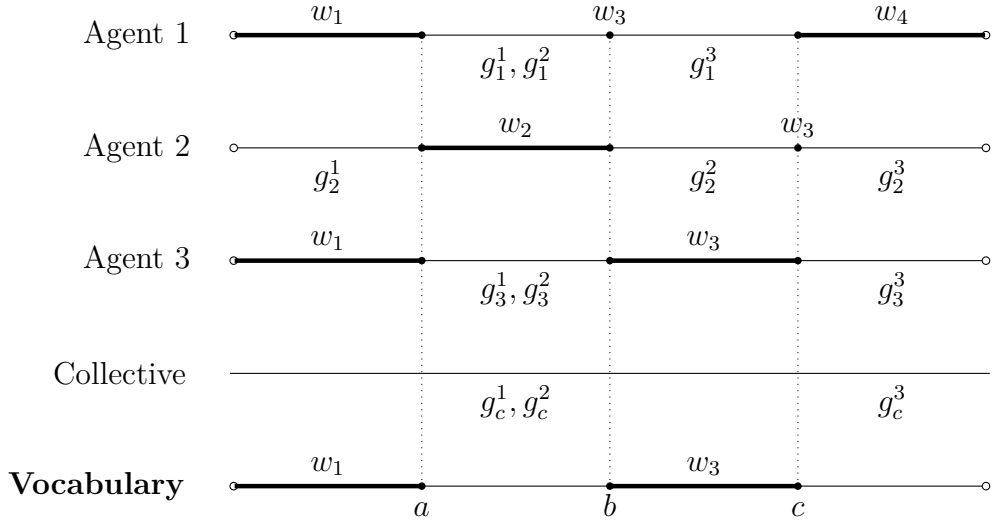

The corresponding matrix for the gaps in the individual induced vocabularies is:
$$M = \begin{bmatrix}
(a,b) & (a,b) & (b,c)\\
(\inf X, a) & (b,c) & (c,1)\\
(a,b) & (a,b) & (c,1)
\end{bmatrix}$$

We apply the median rule: for each $k$, the collective gap $g^k_c$ is the intermediate one among $\{g_1^k, g_2^k, g_3^k \}$. Proceeding by columns, the median rule selects the three collective gaps $(a,b)$, $(a,b)$, and $(c,1)$. The result is depicted in the fourth line of Figure~\ref{fig-12}. Finally, in the fifth line, we use a thick line to represent the incomplete collective vocabulary associated with the positions of the collective gaps.

The aggregation attributes the first segment $(\inf X,a)$ to $w_1$, while it remains possible that $w_1$ extends beyond that. This attribution agrees with the induced vocabularies of two agents. On the other hand, the aggregation attributes the segment $(b,c)$ to $w_3$ even if this holds only for the third agent. 
\end{example}

An important observation that we do not pursue here is that this construction is incrementally consistent. Suppose that, after having derived an incomplete collective vocabulary $V$, agents are successively shown more exemplars. If they label the new exemplars consistently with the old ones in the sense of preserving the linear order underlying them, then applying the same aggregation rule over the enlarged set of exemplars expands $V$ consistently: each collective gap in $V$ may only contract and the partial collective words from $V$ may only expand their extent.

\section{Single-peaked vocabularies}\label{sec6}

If agents have preferences over collective vocabularies, they can try and manipulate the outcome of the aggregation by falsely reporting their own individual endpoints. This section considers a class of individual preferences over (complete) vocabularies for which it is possible to characterize endpoint rules by their strategy-proofness. We build upon the special case with $m=2$ in \citet{blo10} and \citet{far11}, who discuss preferences over intervals for which the median rule is strategy-proof. 

A vocabulary with $m+1$ active words is associated to the $m$ endpoints $s^1 \le s^2 \le \ldots \le s^m$. When the vocabulary has fewer active words, some endpoints have a multiplicity higher than~1.

We start with a notion of betweenness for vocabularies. Intuitively, a vocabulary $M_1$ is between two other vocabularies $M_2$ and $M_3$ if each endpoint $s^k_1$ from $M_1$ lies between the corresponding endpoints $s^k_2$ and $s^k_3$ from $M_2$ and $M_3$, respectively. Formally, given three points $x,y,z$ in the linearly ordered set $X$, we say that $x$ lies between $y$ and $z$ and write $x \in \langle y, z \rangle$ if either $y \leq x \leq z$ or $y \geq x \geq z$. We say that $M_1$ is between $M_2$ and $M_3$ if $s^k_1 \in \langle s^k_2, s^k_3 \rangle$ for any $k = 1, \ldots m$, and we write $M_1 \in  \langle M_2, M_3 \rangle$.

A preference relation $\succeq$ over vocabularies is \emph{single-peaked} if: (i) there is a (unique) peak vocabulary $\widehat{M}$ that is strictly preferred to any $M \ne \widehat{M}$; and (ii) $\widehat{M} \succeq M_1 \succeq M_2$ for any $M_1 \in \langle \widehat{M}, M_2 \rangle$. Note that the peak vocabulary $\widehat{M}$ refers to a collection of the most preferred endpoints. Let $\cal P$ denote the class of single-peaked preference relations over vocabularies with $m$ words. 

Return to the example of Section~\ref{sec21} where Agent~1 recommends equally spaced intervals. He prefers setting the threshold for turning $F$ into $D$ at a score of 20. If he has single-peaked preferences, he dislikes moving the threshold between $F$ and $D$ away from 20 in either direction, and a similar statement holds for all the thresholds in his recommendation.

We consider strategy-proofness for the aggregation rule when the preferences of every agent are in $\cal P$. An agent with single-peaked preferences has no strict incentive to declare endpoints different from his peak. 

\smallskip\noindent
\textbf{Strategy-proofness.} For every $i$ and $\succeq_i$ in $\cal P$, $f(\widehat{M}_i, M_{-i}) \succeq_i f(M_i, M_{-i})$ for all $M \in \mathcal{M}$.
\smallskip

A key result is that strategy-proofness under single-peaked preferences implies separability. This is a generalization of \citet[Proposition 1]{mil25}; see also \citet[Theorem~2]{bor83}. The proof is in Appendix~\ref{app3}.

\begin{thm}\label{prop8}
Any strategy-proof aggregation rule is separable.
\end{thm}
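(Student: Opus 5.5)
We need to show that $f^k(M)$ depends only on the column $M^k$. The argument follows the Border--Jordan style: first exploit the freedom in choosing single-peaked preferences, then move one agent at a time.

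\emph{Step 1 (one-agent invariance).} Fix an agent $i$, a profile $M_{-i}$ of the others, and two reports $M_i, M_i'$ that share the same $k$-th endpoint, $s_i^k = s_i'^k = t$. We aim to show $f^k(M_i, M_{-i}) = f^k(M_i', M_{-i})$.

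Suppose instead that $x = f^k(M_i,M_{-i}) \neq y = f^k(M_i',M_{-i})$, say $x<y$. Consider a single-peaked preference $\succeq$ with peak $M_i$ that cares essentially only about coordinate $k$ lexicographically. For instance, rank vocabularies first by $|s^k - t|$ and break ties by a single-peaked criterion on the remaining coordinates around $M_i$. One must check that this relation belongs to $\cal P$. Condition (ii) holds: if $M_1\in\langle \widehat M, M_2\rangle$, then $|s_1^k-t|\le|s_2^k-t|$. On ties, coordinatewise betweenness gives weak preference under a tie-break that is itself betweenness-monotone, for example $\sum_j |s^j - s_i^j|$.

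With such preferences, strategy-proofness at $M_i$ and at $M_i'$ (using the analogous preference peaked at $M_i'$, which has the same $t$) yields
\[
|x-t| \le |y-t| \quad\text{and}\quad |y-t|\le|x-t|,
\]
so $|x-t|=|y-t|$. To rule out $x = t-d$, $y=t+d$, I perturb the primary criterion asymmetrically. The preference can rank vocabularies by a single-peaked function of $s^k$ with peak $t$ that strictly prefers $t-d$ to $t+d$. Betweenness only constrains comparisons on the same side of the peak, so any such function is admissible. Using one asymmetric version for the report $M_i$ and the reverse asymmetry for $M_i'$ forces $x=y$. A cleaner route is to use the asymmetric preference for $M_i$ to get $x \succeq y$, then the symmetric argument in the other direction gives a contradiction unless $x=y$.

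\emph{Step 2 (chaining).} Take any two profiles $M$ and $M'$ with $M^k = M'^k$. Change the rows one at a time: replace row $1$ of $M$ with row $1$ of $M'$, then row $2$, and so on. Each step alters a single agent's report while keeping that agent's $k$-th endpoint fixed, so by Step~1 it leaves $f^k$ unchanged. Hence $f^k(M)=f^k(M')$, and $f^k$ is a function of $M^k$ alone. This is exactly separability.

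\emph{Main obstacle.} The delicate part is Step~1: constructing preferences that belong to $\cal P$ as defined by vector betweenness, and showing they strictly separate $x$ from $y$ whenever $x \neq y$. That includes handling peaks and outcomes involving the improper endpoints $\inf X$ and $\sup X$. I would first verify condition (ii) for the lexicographic/asymmetric construction carefully. The other point to settle is whether the outcome range of $f$ must be the consistent (ordered) sequences, since this affects which comparisons actually arise.
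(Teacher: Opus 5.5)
Your proof is correct, and it reaches separability by a somewhat different route than the paper. The paper first proves $k$-uncompromisingness (Lemma~\ref{lem1}). It derives this from the assertion that strategy-proofness forces $f(M)\in\langle M_i,f(M')\rangle$ coordinatewise whenever only agent $i$'s report changes. That assertion tacitly needs $\cal P$ to be rich enough that some single-peaked preference with peak $M_i$ strictly prefers $f(M')$ to any $f(M)$ not lying between them. Your lexicographic preferences, with a symmetric or asymmetric first criterion in $s^k$ and an $L^1$ tie-break, are an explicit verification of exactly this richness. Conditions (b)--(c) with an unchanged $k$-th endpoint then force equality of $f^k$, which is your Step~1.

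The more substantive difference is the chaining. The paper passes from $M$ to $M'$ through the matrices $M_\ell$, swapping one column at a time. Consecutive matrices then differ in every agent's report, so the single-agent Lemma~\ref{lem1} does not literally apply. Moreover, intermediate rows need not be ordered: with $m=3$ and $k=1$, rows $(0.1,0.8,0.9)$ in $M$ and $(0.1,0.2,0.3)$ in $M'$ produce the row $(0.1,0.8,0.3)$ in $M_1$. Your agent-by-agent chain uses only genuine unilateral deviations between legitimate profiles, so it is the version that actually goes through as stated.

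What the paper's route buys, once repaired, is a stronger structural property: betweenness of outcomes under arbitrary unilateral deviations, not only those that fix the $k$-th endpoint. Yours buys a self-contained argument.

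Your two worries are harmless. Restricting a relation satisfying (i)--(ii) to ordered sequences preserves both conditions, because the peak is itself ordered. At an improper coordinate $t\in\{\inf X,\sup X\}$ there is only one side, so a strictly monotone criterion in $s^k$ already separates $x\neq y$.

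One slip in Step~1 concerns orientation. With $x=t-d$ the truthful outcome at $M_i$ and $y=t+d$ the outcome at $M_i'$, the contradiction needs one of two preferences. Either take one peaked at $M_i$ whose first criterion strictly prefers $t+d$ to $t-d$, so that reporting $M_i'$ is a strict gain. Or take one peaked at $M_i'$ that strictly prefers $t-d$. As worded, your assignment (the $t-d$-favoring preference at $M_i$, the reverse at $M_i'$) yields no contradiction. A single correctly oriented preference suffices.
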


Separability has two consequences. First, the aggregation rule $f(M)$ decomposes into $m$ aggregation functions $f^k(M^k)$. Second, the assumption of single-peaked preferences over vocabularies implies single-peaked preferences over each $k$-th endpoint. When each $k$-th endpoint is considered by itself, this setup is amenable to the seminal work by \citet{mou80}, who provided characterizations for peak-only strategy-proof aggregation rules. 

An aggregation rule is \emph{peak-only} if its outcome depends only on the peaks of the reported preferences. In our model, this restriction is satisfied because agents can only report their endpoints. One may derive the peak-only property from more general assumptions; see \citet[Lemma~2.1]{spr95}.  

\citet{mou80} provides two alternative formulations. We focus on the class of extended median rules. Let $\overbar{X} = X \cup \{ \inf X, \sup X \}$. Fix an ordered sequence $\textbf{q}$ of $n-1$ calibration parameters\footnote{These parameters are known as ``phantom voters'' in the social choice literature; see \citet{tho18} for a discussion about this terminology.} $q_1 \le q_2 \le \ldots \le q_{n-1}$ from $\overbar{X}$. Given a profile $\textbf{s}$ from $X^n$, the extended median rule $f_\texttt{emed}$ (with the $n-1$ parameters in $\textbf{q})$ computes 
$$f_\texttt{emed} \left( \textbf{s} , \textbf{q} \right)  = \texttt{med} \left(s_1, \ldots , s_n, q_1, \ldots, q_{n-1} \right)$$ 
The extended median recovers all the order statistics when $q_\ell \in \{ \inf X, \sup X \}$ for each $\ell$. For instance, if $q_{n-1} = \inf X$, then $f_\texttt{emed} \left( \textbf{s}, \textbf{q} \right) = s_{(1)}$; if $n$ is odd, $q_{\frac{n-1}{2}} = \inf X$ and $q_{\frac{n+1}{2}} = \sup X$, then $f_\texttt{emed} \left( \textbf{s}, \textbf{q} \right) = \texttt{med} \left( \textbf{s} \right)$. On the other hand, if some $q_\ell \not\in \{ \inf X, \sup X \}$, the extended median may return values different from the order statistics.

A corollary of \citet[Proposition~1]{mou80} is that each component $f^k(M)$ of the aggregation rule is unanimous, anonymous and strategy-proof if and only if it is an extended median rule with $n-1$ parameters.\footnote{Using instead \citet[Proposition~2]{mou80}, one may drop unanimity at the cost of assuming $n+1$ parameters; see f.i.~\citet[Claim~1]{mil25}.} The parameters may be chosen differently for each $f^k$. If we denote by $\textbf{q}^k$ the ordered sequence of $(n-1)$ parameters used for the extended median associated with $f^k$, then consistency requires $q^k_\ell \le q^{k^\prime}_\ell$ for each $\ell = 1, \ldots, n-1$ and each $k < k^\prime$. That is, the ordered sequences $\textbf{q}^k$ are nondecreasing in $k$.
 
\begin{thm}\label{thm8}
An aggregation rule $f$ is consistent, unanimous, anonymous and strategy-proof if and only if each $f^k$ is an extended median rule with $n-1$ parameters $\textbf{q}^k$ that are nondecreasing in $k$.
\end{thm}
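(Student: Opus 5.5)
The plan is to reduce the statement to the one-dimensional characterization of \citet{mou80}, with Theorem~\ref{prop8} doing the reduction, and to treat consistency separately.

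\emph{Necessity.} Let $f$ be consistent, unanimous, anonymous and strategy-proof. Theorem~\ref{prop8} gives separability, so $f(M)=(f^1(M^1),\ldots,f^m(M^m))$. Each component $f^k$ inherits unanimity and anonymity directly, since both axioms are stated coordinatewise or profile-wise.

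The key step is that each $f^k$ is strategy-proof for single-peaked preferences over the $k$-th endpoint alone. Take any single-peaked order $\succeq^k$ on $X$ with peak $\hat s^k$. Fix the other coordinates of agent $i$'s peak, and define a preference on vocabularies by comparing the $k$-th coordinates with $\succeq^k$, breaking ties by a single-peaked order on the remaining coordinates. One checks this lies in $\cal P$: if $M_1\in\langle \widehat M,M_2\rangle$, then $s^k_1$ lies between $\hat s^k$ and $s^k_2$, so $s^k_1\succeq^k s^k_2$.

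Now suppose agent $i$ misreports only column $k$, with the other columns fixed at any admissible values. By separability only $f^k$ changes, so strategy-proofness of $f$ yields strategy-proofness of $f^k$. A technical point must be handled here: reports must be ordered sequences, so the profile $M$ has to remain admissible when column $k$ varies. I will handle this by choosing the other columns far enough apart, or by noting that $f^k$ depends only on $M^k$, so any admissible completion works.

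Then \citet[Proposition~1]{mou80} applies. Since agents report only peaks, $f^k$ is peak-only, and it is unanimous, anonymous and strategy-proof. Hence $f^k(\cdot)=\texttt{med}(\cdot,\textbf{q}^k)$ for some $q^k_1\le\ldots\le q^k_{n-1}$ in $\overbar X$.

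Consistency then requires $\texttt{med}(\textbf{s},\textbf{q}^k)\le\texttt{med}(\textbf{s}',\textbf{q}^{k+1})$ whenever $\textbf{s}\le\textbf{s}'$ componentwise. To extract $q^k_\ell\le q^{k+1}_\ell$, suppose $q^k_\ell>q^{k+1}_\ell$ for some $\ell$. Choose $n-\ell$ agents with $k$-th and $(k+1)$-th endpoints both very close to $\inf X$, and the remaining $\ell$ agents with both endpoints very close to $\sup X$. For the vector $\textbf{s}$ of $k$-th endpoints, the sorted multiset of $2n-1$ values is $n-\ell$ low values, then the $q^k$'s, then $\ell$ high values. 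Its $n$-th element is therefore $q^k_\ell$, and likewise $f^{k+1}=q^{k+1}_\ell$, contradicting consistency. Taking the two endpoints of each agent nearly equal keeps $s_i^k\le s_i^{k+1}$.

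\emph{Sufficiency.} Each extended median is unanimous, anonymous and strategy-proof in its own coordinate. Consistency follows because \texttt{med} is monotone in all $2n-1$ arguments: $M^k\le M^{k+1}$ componentwise and $\textbf{q}^k\le\textbf{q}^{k+1}$ together give $f^k\le f^{k+1}$.

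For strategy-proofness of $f$, fix agent $i$ with peak $\widehat M_i$ and any misreport $M_i$. Coordinatewise, a Moulin rule has the property that a deviation can only move the $k$-th outcome away from $\hat s^k_i$: if $\hat s^k_i$ is below the truthful outcome, any report yields an outcome at least as large, and symmetrically. Hence $f(\widehat M_i,M_{-i})\in\langle \widehat M_i, f(M_i,M_{-i})\rangle$, and condition (ii) of single-peakedness gives the required preference.

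The main obstacle is the necessity step: transferring strategy-proofness to the components while preserving the ordering constraints on reports, and checking that the constructed preference is genuinely in $\cal P$.
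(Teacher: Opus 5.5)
Your proposal is correct and follows the paper's own route---separability from Theorem~\ref{prop8}, Moulin's Proposition~1 applied to each component $f^k$, and consistency forcing the sequences $\textbf{q}^k$ to be nondecreasing in $k$. It also supplies details the paper leaves implicit: the lexicographic preference that transfers strategy-proofness to each component, the explicit profile that pins down the ordering of the parameters, and the sufficiency direction. One small imprecision: when $q^{k+1}_\ell=\inf X$ or $q^k_\ell=\sup X$, the exact evaluations $f^{k+1}=q^{k+1}_\ell$ and $f^k=q^k_\ell$ can fail. Counting elements still gives $f^{k+1}\le\max(q^{k+1}_\ell,L)<\min(q^k_\ell,H)\le f^k$, provided all low entries are at most $L$ and all high entries at least $H$, with $L$ and $H$ close enough to $\inf X$ and $\sup X$, so your contradiction stands.
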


In order to rule out those extended median rules that are not $p$-rules, one needs to make sure that all the calibration parameters are chosen in the set $\overbar{X}\setminus X = \{ \inf X, \sup X \}$. \cite{mil25} suggests a property called translation invariance, which has a technical flavor. We opt for a more intuitive requirement: if everyone strictly raises the $k$-th endpoint, then the collective endpoint should be strictly increased.

\smallskip\noindent
\textbf{Strict responsiveness.} If $M^k \ll M^{\prime k}$ then $f^k(M^k) < f^k (M^{\prime k})$. 
\smallskip

Adding this property to Theorem~\ref{thm8} yields another characterization for the $p$-rules. Unanimity can be dropped and replaced by strict responsiveness; compare Proposition~\ref{prop11} and Proposition~\ref{prop12} in the appendix. See \citet[Theorem~1]{war10} for a similar result.

\begin{thm}\label{thm9}
An aggregation rule $f$ is consistent, anonymous, strictly responsive and strategy-proof if and only if it is a $p$-rule.
\end{thm}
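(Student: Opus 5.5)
The proof has two directions. In both I lean on the decomposition supplied by Theorem~\ref{prop8} and on \citet{mou80}. The one delicate step is showing that strict responsiveness pushes every phantom voter to $\inf X$ or $\sup X$.

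\emph{Sufficiency.}
\begin{enumerate}
\item Take a $p$-rule $f^{\textbf{p}}$. Consistency follows from the order-statistic argument after the definition of $p$-rules, using $p_k \le p_{k+1}$. Anonymity is immediate.
\item Strict responsiveness: if $M^k \ll M^{\prime k}$, then $M^k_{(p_k)} < M^{\prime k}_{(p_k)}$.
\item For strategy-proofness, write each component as an extended median. I would set $q_\ell=\inf X$ for $\ell \le n-p_k$ and $q_\ell=\sup X$ otherwise, and check that the median of the $2n-1$ values is then the $p_k$-th order statistic.
\item An agent misreporting the $k$-th endpoint either leaves $f^k$ unchanged or moves it further from her peak $\widehat{s}^k_i$. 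This is the usual monotonicity of order statistics.
\item By separability this holds coordinatewise. So the truthful outcome $f(\widehat M_i,M_{-i})$ lies in $\langle \widehat M_i, f(M_i,M_{-i})\rangle$, and condition (ii) of single-peakedness gives the required weak preference.
\end{enumerate}

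\emph{Necessity: reduction to extended medians.}
\begin{enumerate}
\item By Theorem~\ref{prop8}, a strategy-proof $f$ is separable, $f(M)=(f^1(M^1),\ldots,f^m(M^m))$.
\item Each $f^k$ is strategy-proof for single-peaked preferences on a line. Fixing the other coordinates at an agent's peak shows that the induced preferences over the $k$-th endpoint are single-peaked.
\item Unanimity is not assumed here. So I invoke \citet[Proposition~2]{mou80} rather than Proposition~1: with anonymity, $f^k(M^k)=\texttt{med}(s^k_1,\ldots,s^k_n,q^k_1,\ldots,q^k_{n+1})$ for $n+1$ parameters in $\overbar X$.
\end{enumerate}

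\emph{Necessity: removing the interior phantoms (main obstacle).}
\begin{enumerate}
\item Consider the diagonal map $g(t)=f^k(t,\ldots,t)$. This is the median of $n$ copies of $t$ and the $n+1$ ordered phantoms.
\item That median equals $t$ clamped to an interval $[q^k_{(j)},q^k_{(j+1)}]$, where $j$ is the index with $n+1-j$ phantoms above; I would verify this index bookkeeping. With the conventions $q_{(0)}=\inf X$ and $q_{(n+2)}=\sup X$, the cases $j=0$ and $j=n+1$ give a constant $g$.
\item Strict responsiveness makes $g$ strictly increasing on the open set $X$. This rules out a constant $g$, so $1\le j\le n$. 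It also forces $q^k_{(j)}=\inf X$ and $q^k_{(j+1)}=\sup X$, since otherwise $g$ would be flat on a nonempty subinterval of $X$.
\item Because the phantoms are ordered, the $j$ phantoms below are all $\inf X$ and the $n+1-j$ above are all $\sup X$.
\item The median of $2n+1$ values containing $j$ copies of $\inf X$ and $n+1-j$ copies of $\sup X$ is the order statistic $M^k_{(p_k)}$ with $p_k=n+1-j\in\{1,\ldots,n\}$.
\end{enumerate}

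\emph{Necessity: monotonicity of $\textbf{p}$.} Finally, consistency forces $p_k\le p_{k+1}$. Suppose instead $p_k>p_{k+1}$. Build a valid profile in which column $k$ has many agents with large values and column $k+1$ has many agents with small values, keeping each row increasing. For instance, let the first $p_{k+1}$ agents have $k$-th and $(k+1)$-th endpoints near some $x$, and the remaining agents have both endpoints near some $y>x$. Then $M^k_{(p_k)}\approx y > x \approx M^{k+1}_{(p_{k+1})}$, which contradicts consistency. Hence $f$ is a $p$-rule.
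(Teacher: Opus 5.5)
Your sufficiency direction, the reduction to extended medians, the final count $p_k=n+1-j$, and the monotonicity argument for $\textbf{p}$ are fine. You are also right that, since unanimity is not assumed, the $n+1$-phantom form from Moulin's Proposition~2 is the appropriate one.

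\textbf{The gap.} It is in the step you flag as the main obstacle: the diagonal map cannot detect interior phantoms. With $n$ copies of $t$ and ordered phantoms $q^k_{(1)}\le\cdots\le q^k_{(n+1)}$, the $(n+1)$-th smallest of the $2n+1$ values equals $t$ as soon as at least one phantom lies on each side of $t$. Hence
\[
g(t)=\max\bigl(q^k_{(1)},\min(t,q^k_{(n+1)})\bigr),
\]
that is, $t$ clamped to $[q^k_{(1)},q^k_{(n+1)}]$, not to $[q^k_{(j)},q^k_{(j+1)}]$. Strict monotonicity of $g$ therefore yields only $q^k_{(1)}=\inf X$ and $q^k_{(n+1)}=\sup X$. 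It says nothing about $q^k_{(2)},\dots,q^k_{(n)}$, so your step~4 assumes what has to be proved. Concretely, take $n=2$ and phantoms $(\inf X,c,\sup X)$ with $c\in X$. The rule $f^k(s_1,s_2)=\texttt{med}(s_1,s_2,c)$ has $g(t)=t$, strictly increasing. Yet it is not a $p$-rule and not strictly responsive: for $s_1<c<s_2$, a small common upward shift leaves the outcome at $c$.

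\textbf{The missing idea.} You need to test strict responsiveness off the diagonal, which is what the paper does.
\begin{enumerate}
\item Suppose $q^k_{(\ell)}=c\in X$ for some $\ell\in\{1,\dots,n+1\}$.
\item Take a profile with $n+1-\ell$ agents strictly below $c$ and $\ell-1$ agents strictly above. Then at most $(n+1-\ell)+(\ell-1)=n$ of the $2n+1$ points lie strictly below $c$, and at most $n$ lie strictly above. So the median is $c$, even if other phantoms coincide with $c$.
\item Raise every agent by an $\varepsilon>0$ small enough that nobody below reaches $c$. Those above stay in $X$ because $X$ is open.
\item This gives $M^k\ll M^{\prime k}$ with the same outcome $c$, contradicting strict responsiveness.
\end{enumerate}
This places every phantom in $\{\inf X,\sup X\}$. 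Your diagonal argument does correctly exclude the two constant cases (all phantoms $\inf X$, or all $\sup X$), and your step~5 then finishes the proof. The paper runs the same pinning argument with the $n-1$ phantoms of Theorem~\ref{thm8}, using $n-\ell$ agents below and $\ell$ above.
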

\begin{proof}
Sufficiency is trivial. To prove necessity, we need to show that strict responsiveness implies $q_\ell^k \in \{ \inf X, \sup X \}$ for any $p$. By contradiction, suppose that there exists some $q_\ell^k \not\in \{ \inf X, \sup X \}$. Then there are $\ell -1$ parameters before $q_\ell^k$ and $n - \ell - 1$ after. Consider a profile $M^k$ with $n - \ell$ values strictly less and $\ell$ strictly more than $q_\ell^k$. Because $q_\ell^k$ has an equal number of points (from $\textbf{q}$ and $M^k$) on either side, we have
$$\texttt{emed} (M^k; \textbf{q})  = \texttt{med} \left( M^k_1, \ldots , M^k_n, q_1, \ldots, q_{n-1} \right) = q_\ell^k$$
Let $\varepsilon_1 = \min_i \left( q_\ell^k - M^k_i , 0 \right) > 0$ and $\varepsilon_2 = \min_i \left( \sup X - q_\ell^k \right) > 0$. Choose $\varepsilon = \min (\varepsilon_1, \varepsilon_2)$ and consider a second profile $M^{\prime k} \gg M^k$ with $M^{\prime k}_i  = M^k_i  + \varepsilon$ for $i=1, \ldots , n$.  All the endpoints in $M^{\prime k}$ have increased, but none has moved past $q_\ell^k$. Hence, $\texttt{emed} (M^{\prime k}; \textbf{q})  = q_\ell^k$, contradicting strict responsiveness.
\end{proof}

\section{Statement about Conflict of Interest}

The authors have no relevant financial or non-financial interests to disclose. They certify that they have no affiliations with or involvement in any organization or entity with any financial interest or non-financial interest in the subject matter or materials discussed in this manuscript.

\appendix
\section{Appendix}

\subsection{Aggregation on ordinal scales}\label{app1} 

We summarize some results in the theory of aggregation on ordinal scales; see  \citet[Chapter~8]{gra09}. Assume that $X$ is  an open interval from the linearly ordered set $(\RR,<)$. Let $f : X^n \rightarrow X$ be a univariate $n$-ary function that aggregates $n$ independent variables on the same ordinal scale into a single value on the same ordinal scale. 

Let $N = \{ 1, 2, \ldots , n \}$ and denote respectively by $\vee$ and $\wedge$ the operator $\max$ and $\min$. A function $f: X^n \rightarrow X$ is a \emph{lattice polynomial} if there exists a nonconstant set function $\alpha : 2^N \rightarrow \{0, 1\}$ with $\alpha (\emptyset) = 0$ such that
$$f(\mathbf{x}) = \bigvee_{\substack{{A \subseteq N} \\ \alpha (A) = 1}} \bigwedge_{\substack {i \in A}} x_i$$
Lattice polynomials admit different representations; see \cite{cha07}. We are mainly interested in the following result \citep[Cor.~4.2]{mar02}. 
\begin{prop}
Suppose that $X$ is open. Then $f : X^n \rightarrow X$ is a stable and continuous function if and only if it is a lattice polynomial.
\end{prop}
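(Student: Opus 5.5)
The statement is the Marichal characterization: on an open $X$, a function $f:X^n\to X$ is stable and continuous exactly when it is a lattice polynomial. I will prove the two directions separately.

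\emph{Sufficiency.} Let $f$ be a lattice polynomial.
\begin{itemize}
\item Continuity is immediate, since $f$ is a finite composition of $\max$ and $\min$, both continuous.
\item For stability, take any increasing bijection $\varphi$. It commutes with $\vee$ and $\wedge$: $\varphi(x\vee y)=\varphi(x)\vee\varphi(y)$, and likewise for $\wedge$.
\item Induction on the structure of the formula then gives $\varphi(f(\mathbf{x}))=f(\varphi(\mathbf{x}))$.
\item The conditions that $\alpha$ is nonconstant and $\alpha(\emptyset)=0$ guarantee that the join is over a nonempty family of nonempty meets, so $f$ is well defined with values in $X$.
\end{itemize}

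\emph{Necessity.} Assume $f$ is stable and continuous. The plan has four steps.

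\textbf{Step 1: $f(\mathbf{x})\in\{x_1,\ldots,x_n\}$.} Suppose $y=f(\mathbf{x})$ differs from every $x_i$. Choose an increasing bijection $\varphi$ that fixes every $x_i$ but moves $y$. This is possible because $X$ is open: pick $\varphi$ to be the identity outside a small interval around $y$ that contains no $x_i$, and nontrivial inside it. Stability then gives $f(\mathbf{x})=f(\varphi(\mathbf{x}))=\varphi(y)\neq y$, a contradiction.

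\textbf{Step 2: the selected coordinate depends only on the ordinal pattern.} For each weak ordering (ordinal pattern) of the coordinates, any two points with that pattern are related by an increasing bijection of $X$. Stability therefore forces $f$ to select the coordinate occupying the same rank in both points, up to ties.

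\textbf{Step 3: reduce to strict orderings.} By continuity, $f$ is determined by its values on the dense set of points with pairwise distinct coordinates. So it suffices to understand, for each permutation $\sigma$ (the ordering $x_{\sigma(1)}<\cdots<x_{\sigma(n)}$), the index $j(\sigma)$ that $f$ selects.

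\textbf{Step 4: build $\alpha$ and verify.} Define
$$\alpha(A)=1 \iff f(\mathbf{1}_A)=1,$$
where $\mathbf{1}_A$ is evaluated on points whose coordinates take only two values $a<b$ in $X$, with $b$ on $A$ and $a$ off $A$. By Step 1 the value there lies in $\{a,b\}$, and by Step 2 the choice does not depend on $(a,b)$.

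\begin{itemize}
\item \emph{Monotonicity of $f$.} If $\mathbf{x}\le\mathbf{x}'$, take the path that raises one coordinate at a time. Along it $f$ is continuous, so $f$ equals a fixed coordinate on each open cell, and the selected value cannot jump downward across cell boundaries. This shows $f$ is nondecreasing in each variable.
\item \emph{$\alpha$ is monotone and nonconstant.} Monotonicity of $\alpha$ follows from that of $f$. Applying idempotence on constant vectors at the two-point level gives $\alpha(\emptyset)=0$ and $\alpha(N)=1$, so $\alpha$ is nonconstant.
\item \emph{Agreement with the lattice polynomial.} On a strictly ordered point, $f(\mathbf{x})\ge x_{\sigma(k)}$ exactly when $f$ evaluated on the two-valued threshold version (coordinates $\ge x_{\sigma(k)}$ set to $b$, the rest to $a$) returns $b$. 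This uses monotonicity together with Step 2 via a limiting increasing collapse; that limit is handled by continuity, since a collapse is not itself a bijection.
\item This threshold characterization is exactly the characterization of the lattice polynomial with set function $\alpha$, so the two agree on strictly ordered points and hence, by continuity, everywhere.
\end{itemize}

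The main obstacle is justifying the threshold collapse and monotonicity rigorously. Stability only covers bijections, so collapsing several coordinates to two values has to go through limits of bijections plus continuity. This is precisely where the equivalence with stability under continuous nondecreasing surjections, noted earlier in the paper via \citet[Prop.~5.4]{mar04}, does the work. I would invoke that equivalence to pass directly to nondecreasing surjections. The argument then becomes the standard Boolean reduction: $f$ commutes with all threshold maps, hence is determined by its restriction to $\{a,b\}^n$, which is the monotone Boolean function $\alpha$.
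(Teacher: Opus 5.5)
Your proof is correct. It necessarily takes a different route from the paper, which gives no argument for this proposition and imports it from \citet[Cor.~4.2]{mar02}.

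You reconstruct the result from scratch:
\begin{itemize}
\item Openness forces $f$ to select a coordinate (Step~1). Applied to constant vectors, this also gives the idempotency you later use for $\alpha(\emptyset)=0$ and $\alpha(N)=1$. This is exactly what fails for non-open $X$: a constant equal to an endpoint of a closed $X$ is stable and continuous. That is why the paper adds unanimity in Proposition~\ref{prop10}.
\item Stability makes the selected rank depend only on the ordinal pattern.
\item Continuity lets you pass to threshold collapses and read $f$ off the monotone Boolean function $\alpha$.
\end{itemize}
The citation buys brevity. Your route shows exactly where openness and where continuity enter.

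Two tightenings are worth making. First, the only direction of \citet[Prop.~5.4]{mar04} you need has a two-line proof: a stable and continuous $f$ commutes with every continuous nondecreasing surjection $\psi$. Set $\varphi_\varepsilon=(1-\varepsilon)\psi+\varepsilon\,\mathrm{id}$. It is continuous and strictly increasing, and it tends to the ends of the open interval $X$ because $\psi$ does, so it lies in $\Phi^\uparrow$. It also converges pointwise to $\psi$, so continuity of $f$ turns $f(\varphi_\varepsilon(\mathbf{x}))=\varphi_\varepsilon(f(\mathbf{x}))$ into $f(\psi(\mathbf{x}))=\psi(f(\mathbf{x}))$. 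Proving this directly keeps your argument self-contained, and safe from circularity should that proposition itself rest on the lattice-polynomial characterization.

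Second, once you have this, the threshold identity $f(\mathbf{x})\ge t \iff \alpha(\{i: x_i\ge t\})=1$ holds at every point, ties included. So the reduction to strictly ordered points and the density argument are unnecessary.

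In the monotonicity step, say explicitly that on each open cell $f$ is either the moving coordinate or a constant. Continuity then rules out jumps of either sign, not only downward ones.
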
 
If one adds unanimity (also known as idempotency) to continuity, then the characterization of lattice polynomials holds even if $X$ is not open.
\begin{prop}\label{prop10}
The function $f : X^n \rightarrow X$ is unanimous, stable and continuous if and only if it is a lattice polynomial.
\end{prop}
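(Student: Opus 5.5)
Proposition~\ref{prop10} says that $f : X^n \to X$ is unanimous, stable and continuous exactly when it is a lattice polynomial. I plan to prove it by reducing to the preceding proposition (\citet[Cor.~4.2]{mar02}) when $X$ is open. When $X$ is not open, I will handle the boundary points using unanimity.

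\emph{Sufficiency.} Let $f(\mathbf{x})=\bigvee_{\alpha(A)=1}\bigwedge_{i\in A}x_i$ with $\alpha$ nonconstant and $\alpha(\emptyset)=0$. Then some nonempty $A$ has $\alpha(A)=1$.
\begin{itemize}
\item Unanimity: if $x_i=s$ for all $i$, every nonempty min-term equals $s$, and at least one such term is present, so $f=s$.
\item Continuity: $\max$ and $\min$ are continuous, and so is any finite composition of them.
\item Stability: an increasing bijection $\varphi$ commutes with $\max$ and $\min$, since $\varphi(x\vee y)=\varphi(x)\vee\varphi(y)$ and likewise for $\wedge$. Induction on the formula gives $\varphi(f(\mathbf{x}))=f(\varphi(\mathbf{x}))$.
\end{itemize}

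\emph{Necessity.} If $X$ is open, the preceding proposition applies directly (unanimity is not even needed). Otherwise, write $X^\circ$ for the interior of $X$. The first task is to show that $f$ maps $(X^\circ)^n$ into $X^\circ$, so that the restriction lies in the scope of the open case. Take $\mathbf{x}\in (X^\circ)^n$ and suppose $f(\mathbf{x})$ equals an endpoint, say $\min X$. Choose $\varphi\in\Phi^\uparrow$ that fixes a neighborhood of $\min\mathbf{x}$ and $\max\mathbf{x}$ but moves other points.

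That choice does not work directly, so I will use unanimity instead. Put $m_0=\min_i x_i>\min X$. Take the lattice-polynomial candidate later, and first show that $f(\mathbf{x})\ge m_0$. Pick $\varphi$ increasing with $\varphi=\mathrm{id}$ on $[m_0,\max X]$ and $\varphi\ne\mathrm{id}$ below $m_0$. Stability gives $\varphi(f(\mathbf{x}))=f(\mathbf{x})$. Since any value below $m_0$ can be moved by such a $\varphi$, we get $f(\mathbf{x})\ge m_0$, and symmetrically $f(\mathbf{x})\le\max_i x_i$. Hence $f(\mathbf{x})\in[\min_i x_i,\max_i x_i]\subseteq X^\circ$.

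This fixed-point argument is the crucial observation, and it in fact yields internality on all of $X^n$. Only points that every admissible $\varphi$ fixes need separate care: the boundary points themselves, which every increasing bijection fixes.

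With internality established, the restriction of $f$ to $(X^\circ)^n$ is stable (the increasing bijections of $X^\circ$ extend to bijections of $X$ fixing the endpoints) and continuous. The open case then gives a lattice polynomial $L$ with $f=L$ on $(X^\circ)^n$. Both $f$ and $L$ are continuous on $X^n$, and $(X^\circ)^n$ is dense in $X^n$, so $f=L$ on all of $X^n$.

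The main obstacle is the internality/range step. On a closed $X$ a stable continuous $f$ could in principle be the constant $\min X$, since every increasing bijection fixes that value. Unanimity is exactly what excludes this: it forces the set function $\alpha$ to be nonconstant. The fixed-point argument has to be carried out carefully to show that $f(\mathbf{x})$ cannot sit at a boundary point when all the $x_i$ are interior.
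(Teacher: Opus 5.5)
Your overall plan is sound, and it is more self-contained than the paper, which offers no argument and simply cites the aggregation literature. The plan has two parts: verify sufficiency directly, then reduce necessity to the open case via internality and a density argument. The sufficiency part, the extension of increasing bijections of $X^\circ$ to $X$, and the final density step are all fine.

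\textbf{The gap is the internality step.} Your fixed-point argument only shows that $f(\mathbf{x})$ is fixed by every $\varphi\in\Phi^\uparrow$ that fixes $x_1,\ldots,x_n$. An endpoint of $X$ that belongs to $X$ is fixed by \emph{every} increasing bijection. So the argument yields only $f(\mathbf{x})\in\{x_1,\ldots,x_n\}\cup(X\setminus X^\circ)$, not $f(\mathbf{x})\in[\min_i x_i,\max_i x_i]$.

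A quick diagnostic confirms this. Your necessity argument never actually invokes unanimity, despite announcing that it will. Yet the constant map $f\equiv\inf X$ on a closed $X$ is stable and continuous and violates the internality you claim. So no argument that avoids unanimity can deliver that conclusion. You acknowledge the issue in your final paragraph but leave it unresolved. The remark that unanimity ``forces $\alpha$ to be nonconstant'' does not supply an argument, because $\alpha$ only exists once the reduction has been completed.

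\textbf{How to close it.} The missing step is short and uses unanimity and continuity together. Suppose $\mathbf{x}\in(X^\circ)^n$ and $f(\mathbf{x})=\inf X\in X$. Let $v_1<\cdots<v_r$ be the distinct values among the $x_i$. For any $w_1<\cdots<w_r$ in $X^\circ$ there is $\varphi\in\Phi^\uparrow$ with $\varphi(v_j)=w_j$ for all $j$. Stability then gives $f(\varphi(\mathbf{x}))=\varphi(\inf X)=\inf X$. Choose $w_j=t+j\varepsilon$ with $t\in X^\circ$ and let $\varepsilon\to 0$, so that $\varphi(\mathbf{x})\to(t,\ldots,t)$. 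Continuity gives $f(t,\ldots,t)=\inf X\neq t$, contradicting unanimity. The same argument excludes $\sup X$.

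With this in place, $f$ maps $(X^\circ)^n$ into $X^\circ$, and the rest of your reduction goes through.
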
 
Adding anonymity (also known as symmetry) leads to a characterization of order statistics; see \cite{ovc96} and \cite{mar02}.
\begin{prop}\label{prop11}
The function $f : X^n \rightarrow X$ is unanimous, anonymous, stable and continuous if and only if it is an order statistic.
\end{prop}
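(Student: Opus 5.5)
The plan is to build on Proposition~\ref{prop10}: a unanimous, stable and continuous $f$ is a lattice polynomial
$$f(\mathbf{x})=\bigvee_{A:\,\alpha(A)=1}\ \bigwedge_{i\in A} x_i .$$
So the task reduces to showing that anonymity forces this lattice polynomial to be an order statistic. The converse direction is routine. An order statistic $\texttt{os}_k$ is unanimous, since $\texttt{os}_k(x,\ldots,x)=x$. It is anonymous, since sorting ignores permutations. It is stable, because an increasing bijection preserves the sorted arrangement, so $\varphi(x_{(k)})=(\varphi(\mathbf{x}))_{(k)}$. It is continuous, since it is the lattice polynomial $\bigvee_{|A|=n-k+1}\bigwedge_{i\in A}x_i$; a max of mins of coordinates is continuous.

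For necessity, first I will normalize $\alpha$. The lattice polynomial depends only on the minimal sets $A$ with $\alpha(A)=1$, so I may replace $\alpha$ by its upward closure. Equivalently, I may assume $\alpha$ is monotone without changing $f$. Then I identify $\alpha$ from $f$ using 0/1-type test vectors. Pick two points $u<v$ in $X$, and for $B\subseteq N$ let $\mathbf{x}^B$ have $v$ on $B$ and $u$ off $B$. Evaluating the formula gives $f(\mathbf{x}^B)=v$ if $\alpha(B)=1$ and $u$ otherwise, once $\alpha$ is monotone. The point is that $\bigwedge_{i\in A}x^B_i=v$ exactly when $A\subseteq B$. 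Hence the monotone $\alpha$ is uniquely determined by $f$.

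Next I use anonymity. For any permutation $\pi$, $f(\mathbf{x}^{\pi B})=f(\mathbf{x}^B)$, so $\alpha(\pi B)=\alpha(B)$. A monotone set function invariant under all permutations depends only on $|B|$. It is therefore of the form $\alpha(B)=1$ iff $|B|\ge j$ for some threshold $j$. Here $j\ge1$ because $\alpha(\emptyset)=0$, and $j\le n$ because $\alpha$ is nonconstant, so $\alpha(N)=1$. It follows that
$$f(\mathbf{x})=\bigvee_{|A|= j}\ \bigwedge_{i\in A}x_i .$$
Finally, this equals $x_{(n-j+1)}$. Any $j$-set contains some index whose value is at most $x_{(n-j+1)}$, so the min over every $j$-set is at most $x_{(n-j+1)}$. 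The $j$ indices holding the $j$ largest values achieve exactly $x_{(n-j+1)}$. So $f=\texttt{os}_{n-j+1}$.

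The main obstacle I expect is the identification step: justifying that the representing $\alpha$ can be taken monotone and is then recoverable from $f$ through the two-valued test vectors. This needs care because the representation in the definition is not unique, and anonymity is a property of $f$, not of an arbitrary representing $\alpha$. I will handle this by working with the canonical upward closure, which gives the same $f$ since adding supersets of qualifying sets only adds dominated terms to the join. An alternative, if needed, is to cite the known characterization of symmetric lattice polynomials as order statistics \citep{ovc96,mar02}.
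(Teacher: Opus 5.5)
Your proof is correct and follows essentially the route the paper indicates. The paper offers no argument beyond citing \cite{ovc96} and \cite{mar02} and presenting Proposition~\ref{prop11} as Proposition~\ref{prop10} plus anonymity; your steps are the standard derivation along those lines: reduce to a lattice polynomial, pass to the upward-closed $\alpha$ recovered from $f$ via the two-valued test vectors (so that anonymity of $f$ transfers to $\alpha$), and obtain a cardinality threshold $j$, hence $f=\texttt{os}_{n-j+1}$. The converse direction is also handled correctly.
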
 

Theorem~1 deals with a separable aggregation rule $f(\textbf{s})=\left( f^1(\textbf{s}), \ldots, f^m(\textbf{s}) \right)$ where each component function satisfies~Proposition~\ref{prop11}. A straightforward application of Proposition~\ref{prop11} implies that each function $f^k$ is an order statistic. The argument given in the main text before the statement of Theorem~\ref{thm1} shows that strict consistency necessitates that the $p_i$'s are nondecreasing.

One may replace unanimity in Proposition~\ref{prop10} or~\ref{prop11} with responsiveness, also known as weak monotonicity: if $\textbf{s} \leqq \textbf{s} ^\prime$ then $f(\textbf{s} ) \le f (\textbf{s} ^\prime)$. See \citet[Proposition 3.3.(iv) and Thm.~4.3]{mar02}.
\begin{prop}\label{prop12}
The function $f : X^n \rightarrow X$ is responsive, stable and continuous if and only if it is a lattice polynomial.
\end{prop}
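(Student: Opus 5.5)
The statement to prove is Proposition~\ref{prop12}: a function $f : X^n \to X$ is responsive, stable and continuous if and only if it is a lattice polynomial. The plan is to reduce to Proposition~\ref{prop10} by showing that responsiveness, together with stability and continuity, forces unanimity. The converse direction is routine. A lattice polynomial is a max of mins of coordinates, so it is nondecreasing in each argument. It is continuous because $\vee$ and $\wedge$ are continuous. It is stable because every increasing bijection $\varphi$ commutes with $\max$ and $\min$.

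For the forward direction, first study the diagonal map $g(x) = f(x,\ldots,x)$. Stability gives $g(\varphi(x)) = \varphi(g(x))$ for every $\varphi \in \Phi^\uparrow$, and I would use this to show $g(x)=x$.

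\begin{enumerate}
\item Fix $x$ and suppose $g(x) = y > x$. Pick $z > y$ in $X$, which is possible because $X$ is open.
\item Choose an increasing bijection $\varphi$ with $\varphi(x) = x$ and $\varphi(y) = z$. This exists because $X$ is an open interval: take a piecewise linear map fixing $x$ that stretches $[x,y]$ onto $[x,z]$, suitably completed above $y$.
\item Then $g(x) = g(\varphi(x)) = \varphi(g(x)) = \varphi(y) = z \neq y$, a contradiction. The case $y < x$ is symmetric, so $g(x) = x$, which is unanimity.
\end{enumerate}

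This argument uses only stability on an open $X$; responsiveness is not needed here. Once unanimity holds, Proposition~\ref{prop10} applies and gives the lattice polynomial representation.

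The main obstacle is the case where $X$ is not open, since Proposition~\ref{prop10} is stated for general intervals. There the fixed-point argument can fail at endpoints that every $\varphi$ must fix, so a constant function sitting at an endpoint is not excluded. This is where responsiveness should do the work. My first attempt is to combine responsiveness with continuity to show that $g$ is nondecreasing and commutes with all $\varphi$ fixing the endpoints. From this I would conclude that $g$ equals the identity on the interior. The remaining possibility, a constant equal to an endpoint, I would then treat separately, either by using the nonconstancy of $\alpha$ in the definition or by comparing with \citet[Prop.~3.3(iv), Thm.~4.3]{mar02}.

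If this boundary case resists, I would simply invoke the cited results of \citet{mar02}, as the paper itself does, since in the paper's standing assumption $X$ is open and the argument above suffices.
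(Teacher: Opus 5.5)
Your argument is correct in the setting where the proposition is stated, since the appendix assumes $X$ is open. It takes a different route from the paper, which gives no argument of its own and only cites Marichal's Prop.~3.3(iv) and Thm.~4.3.

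You prove directly that on an open interval stability alone forces unanimity. The diagonal map $g$ commutes with every $\varphi\in\Phi^\uparrow$, and the increasing bijections fixing $x$ act transitively on the points above $x$ and on those below it, so $g(x)=x$. Proposition~\ref{prop10} then finishes, and your converse check is routine and correct.

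What your route buys is transparency. It shows that for open $X$ responsiveness plays no role: the statement reduces to the first proposition of Appendix~\ref{app1} (stable and continuous if and only if lattice polynomial), plus the remark that lattice polynomials are nondecreasing. You could even cite that proposition directly and skip the diagonal argument. The paper's phrasing ("replace unanimity with responsiveness", following a remark about non-open $X$) suggests the citation is aimed at non-open domains.

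That is exactly where your last paragraph goes wrong, and it cannot be patched. Suppose $\inf X\in X$. Then the constant map $f\equiv\inf X$ is nondecreasing, continuous and stable, because every increasing bijection of $X$ fixes $\inf X$. Yet it is not a lattice polynomial: $\alpha(\emptyset)=0$ together with $\alpha$ nonconstant forces every lattice polynomial to satisfy $f(x,\ldots,x)=x$. So the nonconstancy of $\alpha$ is what makes this a counterexample, not a tool for excluding it.

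For the same reason, your intermediate claim fails. Responsiveness plus continuity does not make $g$ the identity on the interior of $X$, since $g$ may be identically $\inf X$ there. For non-open $X$ the statement as written needs an extra hypothesis, such as $f$ nonconstant or unanimous, or a notion of lattice polynomial that admits boundary constants. Openness is essential to both your proof and the statement.
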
 
\cite{cha07} extends Proposition~\ref{prop12} from the finite-dimensional to the infinite-dimensional case. 

\subsection{Independence of axioms}\label{app2} 

We show that the four axioms in Theorem~\ref{thm1} are independent by means of four counterexamples.

\smallskip\noindent 
1) Anonymity. Any dictatorship is unanimous, stable and continuous, but it is not anonymous.

\smallskip\noindent 
2) Stability. The mean rule is unanimous, anonymous and continuous, but it is not stable.

\smallskip\noindent 
3) Unanimity. A rule that sets $f^1(M^1)$ equal to $\inf X$ is anonymous, stable and continuous, but it is not unanimous.

\smallskip\noindent 
4) Continuity. Assume $n \ge 3$. A rule that sets $f^1(M^1)$ equal to $M^1_{(1)}$ if $s^1_i \ne s^1_j$ for all $i \ne j$ and otherwise sets it equal to $M^1_{(n)}$ is unanimous, anonymous, and stable, but it is not continuous.

\subsection{Strategy-proofness}\label{app3}

We begin with a property that constrains how an aggregation rule is affected when agent $i$ changes one endpoint.

\smallskip\noindent 
\textbf{$k$-Uncompromisingness.} Given $i$, consider $M$ and $M^\prime$ such that $M_j = {M^\prime}_j$ for all $j \neq i$. An aggregation rule is \emph{uncompromising} over the $k$-th endpoint if: either (a) $f^k(M) = f^k({M^\prime})$; or (b) both $f^k(M) \in \langle M^k_i, f^k(M^\prime) \rangle$ and (c) $f^k({M^\prime}) \in \langle {M^\prime}^k_i, f^k(M) \rangle $ hold.
\smallskip

When the aggregation rule $f$ is strategy-proof, for any preference $\succeq_i$ in $\cal P$ with peak $M_i$, $f(M) \in \langle M_i, f(M^\prime) \rangle$. (Otherwise, assuming a preference in $\cal P$ with $f(M^\prime) \succ_i f(M)$, Agent~$i$ could profitably misreport $M^\prime_i$.) Symmetrically, $f(M^\prime) \in \langle M^\prime_i, f(M) \rangle$. 

\begin{lemma}\label{lem1}
A strategy-proof aggregation rule is $k$-uncompromising for every agent $i$ and every endpoint $k$.
\end{lemma}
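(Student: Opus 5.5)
The plan is to use the freedom in choosing a single-peaked preference with prescribed peak $M_i$, and to show that any violation of uncompromisingness on coordinate $k$ lets agent $i$ gain by misreporting. Fix agent $i$, two profiles $M$ and $M'$ that differ only in row $i$, and an endpoint $k$. Suppose $f^k(M)\neq f^k(M')$. We must then establish (b) $f^k(M)\in\langle M^k_i, f^k(M')\rangle$ and, symmetrically, (c) $f^k(M')\in\langle M'^k_i, f^k(M)\rangle$. By the symmetry of the roles of $M$ and $M'$, proving (b) suffices: (c) follows by swapping the two profiles and using the true peak $M'_i$.

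To prove (b), argue by contradiction. Assume $f^k(M)\notin\langle M^k_i, f^k(M')\rangle$. Then $f^k(M')$ lies strictly closer to $M^k_i$ than $f^k(M)$ does, on the same side of it. (Alternatively, $f^k(M)$ and $f^k(M')$ lie on opposite sides of $M^k_i$, with $f^k(M)$ not between them in the stated sense.) Define a single-peaked preference $\succeq_i$ with peak $\widehat M=M_i$ that is lexicographic: it first compares vocabularies by $|s^k-M^k_i|$, and then breaks ties by the distances in the other coordinates in a fixed order. On the opposite-sides case, take the $k$-th criterion to be an asymmetric distance that penalizes the side where $f^k(M)$ lies more heavily. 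This is the delicate point.

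We then check that $\succeq_i\in\mathcal P$. The peak is unique. If $M_1\in\langle \widehat M, M_2\rangle$, then every coordinate of $M_1$ is weakly closer to the peak than that of $M_2$, measured on the same side. Any distance that is monotone on each side of the peak therefore ranks $M_1$ weakly above $M_2$. This holds even for the asymmetric version, because betweenness keeps coordinates on the same side as the peak. With such a preference, $f(M')\succ_i f(M)$. Since $M$ has row $i$ equal to the true peak, agent $i$ profits by reporting $M'_i$, contradicting strategy-proofness. This is exactly the remark preceding the lemma, now made coordinate-wise.

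The main obstacle is the opposite-sides case. If $f^k(M)<M^k_i<f^k(M')$, then $M^k_i$ lies between the two outcomes, and (b) as written requires $f^k(M)\in\langle M^k_i,f^k(M')\rangle$, which fails. To handle it, we use that $\mathcal P$ places no symmetry restriction across the peak. We choose a preference that strictly prefers any value to the right of the peak over any value to the left, in coordinate $k$ first and lexicographically. This is still single-peaked, since betweenness never moves a coordinate across the peak. So $f(M')\succ_i f(M)$ again gives a profitable deviation. The remaining cases, in which $f^k(M)$ lies beyond $f^k(M')$ on the same side of the peak, are handled by the symmetric distance. These cases together exhaust the negation of (b), which completes the argument.
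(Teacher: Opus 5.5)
Your proof is correct and follows the paper's idea: under strategy-proofness, a suitably chosen single-peaked preference with peak $M_i$ forces $f(M)\in\langle M_i, f(M')\rangle$ (and symmetrically $f(M')\in\langle M'_i, f(M)\rangle$), and the $k$-th coordinates of these are exactly (b) and (c). The differences are minor: you construct explicitly the lexicographic, possibly asymmetric, preference whose existence the paper only asserts in the remark before the lemma, and your direct argument shows that the paper's case split on whether $f^k(M)\in\langle M^k_i, {M'}^k_i\rangle$ is not needed.
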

\begin{proof} Fix an agent $i$ and an endpoint $k$. Assume $M_j = M^\prime_j$ for all $j \neq i$. There are two cases: either $f^k(M) \in \langle M^k_i, {M^\prime}^k_i \rangle$ or $f^k(M) \not\in \langle M^k_i, {M^\prime}^k_i \rangle$. In the first case, because $f(M^\prime) \in \langle M^\prime_i, f(M) \rangle$, then $f^k({M^\prime}) \in \langle {M^\prime}^k_i, f^k(M) \rangle$, so (b) and (c) hold. 

In the second case, either $f^k(M) < \inf( M^k_i, {M^\prime}^k)$ or $f^k(M) > \sup( M^k_i, {M^\prime}^k)$. Suppose $f^k(M) < \inf( M^k_i, {M^\prime}^k)$. Because $f(M) \in \langle M_i, f(M^\prime) \rangle$, then $f^k(M^\prime) \le f^k(M) \le M_i^k$. And, because $f(M^\prime) \in \langle M^\prime_i, f(M) \rangle$, then $f^k(M) \le f^k(M^\prime) \le M^{\prime k}_i$. Therefore, (a) holds. 

Finally, suppose $f^k(M) > \sup( M^k_i, {M^\prime}^k)$. Because $f(M) \in \langle M_i, f(M^\prime) \rangle$, then $f^k(M) \le f^k(M^\prime)$.  And, because $f(M^\prime) \in \langle M^\prime_i, f(M) \rangle$, then $f^k(M^\prime) \le f^k(M)$. Again, (a) holds. 
\end{proof}

\begin{customthm}{\ref{prop8}}
Any strategy-proof aggregation rule is separable.
\end{customthm}
\begin{proof}
Fix an arbitrary $k$-th endpoint and assume that the aggregation rule $f$ is strategy-proof. We need to show that if $M^k_i={M^\prime}^k_i$ for all $i$, then $f^k(M) = f^k({M^\prime})$. For $\ell=0,1, \ldots, m$, define $M_\ell$ as the matrix where the first $m -\ell$ columns are copied by $M$ and the last $\ell$ columns from $M^\prime$:
$$M_\ell = \left[ M^1 \cdots M^{m-\ell} \, \big\vert \, (M^{\prime})^{m - \ell +1} \cdots (M^{\prime})^m \right]$$
Clearly, $M_0 = M$ and $M_m= M^\prime$. 

Using induction, we show that $f^k (M_{\ell -1}) = f^k (M_\ell)$ for $\ell = 1, \ldots , m$. By Lemma~\ref{lem1}, $f$ is $k$-uncompromising. Therefore, either (a) $f^k(M_{\ell -1}) = f^k({M_\ell})$; or (b) both $f^k(M_{\ell - 1}) \in \langle M^k_{\ell -1}, f^k(M_\ell) \rangle$ and (c) $f^k({M_\ell}) \in \langle {M_\ell}^k, f^k(M_{\ell -1}) \rangle$. If (a) holds, we are done. Otherwise, unpacking (b) and (c), either (b1) $M^k_{\ell -1} \le f^k(M_{\ell - 1}) \le f^k(M_\ell)$ or (b2) $M^k_{\ell -1} \ge f^k(M_{\ell - 1}) \ge f^k(M_\ell)$, and either (c1) ${M_\ell}^k \le f^k({M_\ell}) \le f^k(M_{\ell -1})$ or (c2) ${M_\ell}^k \ge f^k({M_\ell}) \ge f^k(M_{\ell -1})$. Pairing the second inequalities from (b1) and (c1), or from (b2) and (c2), $f^k(M_{\ell -1}) = f^k({M_\ell})$ follows. Analogously, chaining the second inequalities from (b1) and (c2), or from (b2) and (c1), $f^k(M_{\ell -1}) = f^k({M_\ell})$ follows because $M^k_{\ell - 1} = M^k_\ell$.
\end{proof}

\end{document}